\documentclass{article}
\usepackage{booktabs} 
\usepackage{float}
\usepackage[ruled]{algorithm2e} 

\usepackage[margin=1in]{geometry}
\usepackage{graphicx} 
\usepackage{amsmath, amsfonts, amsthm}
\usepackage{mathrsfs}
\usepackage{biblatex}

\newtheorem{definition}{Definition}
\newtheorem{lemma}{Lemma}
\newtheorem{theorem}{Theorem}
\newtheorem{proposition}{Proposition}

\newcommand{\vr}{\mathcal{R}}
\newcommand{\ati}{A_{-i}^t}
\newcommand{\atti}{A_{-i}^T}
\newcommand{\sat}{\text{sat}}
\newcommand{\score}{\text{score}}

\newcommand{\bw}{\bar{w}}
\newcommand{\argmin}{\text{argmin}}
\newcommand{\argmax}{\text{argmax}}
\newcommand{\flr}[1]{\lfloor #1 \rfloor}
\newcommand{\acup}{A^{\cup}}
\newcommand{\acap}{A^{\cap}}
\newcommand{\strat}{\tilde{A}^t(v_i)}
\newcommand{\tstrat}{\tilde{A}^T(v_i)}
\newcommand{\onstrat}{\tilde{A}^t(v_i)}

\newcommand{\ostrat}{\tilde{A}_t(v_i)}

\newcommand{\onstratdef}[1]{(A_1(v_i), A_2(v_i), \dots, #1)}

\newcommand{\instance}{(N,A^T,C^T)}
\newcommand{\ncap}{N^{\cap}}
\newcommand{\nncap}{\tilde{N}^{\cap}}
\newcommand{\lexmin}{\text{lexmin}}

\author{
Allan Borodin \\
Department of Computer Science \\
University of Toronto
\and
Tristan Lueger \\
Department of Computer Science \\
University of Toronto
}

\date{}

\title{Online Temporal Voting: Strategyproofness, Proportionality and Asymptotic Analysis}

\begin{document}
\begin{titlepage}

\maketitle

\vspace{1cm}
\setcounter{tocdepth}{2} 
\tableofcontents

\end{titlepage}

\begin{abstract}
We study online temporal voting, where a group of voters submit 0/1 approvals on sets of alternatives that arrive online over multiple rounds and a single alternative is chosen in each round. We introduce online variants of two well-known game theoretic properties, strategyproofness (SP) and independence of irrelevant alternatives. We show that online independence of irrelevant alternatives (OIIA) is a sufficient condition for online strategyproofness (OSP), and that several known online voting rules satisfy OIIA and thus OSP, but that they are not SP. In particular, we show that Perpetual Phragmén, the only known online voting rule to satisfy PJR, satisfies OSP. The Method of Equal Shares (MES), a semi-online voting rule knwon to satisfy wEJR, also satisfies OSP. We then introduce the price of manipulability, which quantifies the effect of strategic behaviour on proportional representation guarantees. Finally, we introduce asymptotic satisfaction of proportional representation and show that an online voting rule, Serial Dictator, is fully strategyproof and satisfies proportional justified representation (PJR) up to an additive constant. 
\end{abstract}

\section{Introduction}
The temporal voting framework studies the problem of satisfying precisely defined notions of individual and group fairness to voters over a sequence of votes. The paradigmatic example is a group of friends that meets regularly for dinner. Each week, they decide on where to eat. If every week they simply go to the restaurant with the most votes, one can easily imagine a scenario where a like-minded, razor-thin majority decides where to eat every week, ignoring the wishes of the minority. Worse still, the dominant group need not even be a majority; if, say, a third of the group are always happy to go to their same favorite pizza restaurant, and nobody else can agree on an alternative even though they all hate pizza, the minority nonetheless compels everyone to eat pizza week after week. The temporal voting problem is the problem of defining fairness in such scenarios and designing voting rules that guarantee this fairness. 

Arguably more impactful applications can be found in the domain of artificial intelligence. Some agentic systems built from an ensemble of LLMs must merge the outputs of the models over sequences of decisions. Proportional representation could help improve performance in a number of settings. Virtual democracy is a setting where machine learning models are tasked with representing the preferences of humans over a large number of votes that would be too time-consuming for a human to participate in themselves. Here, too, the concept of proportional representation over time is a natural and desirable outcome. A detailed discussion can be found in Peters \cite{prop-ai}.

The online setting is natural for temporal voting. We consider scenarios where the alternatives and approvals in each round are revealed sequentially. For example, suppose that the issues at stake in virtual democracy are happening in real-time, then they are revealed sequentially and in this sense the alternatives arrive online. Or suppose that the voters wish to intervene and update their instructions to their representative models as their opinions change and evolve over time, then in this sense the approvals arrive online. Of course, both alternative and approvals can arrive online, and this general setting is the one we consdier. 

Prior work has studied both online and offline settings. The principal fairness axioms considered are notions of proportional representation, that is, ensuring that groups of voters are guaranteed a degree of satisfaction in some sense proportional to their size. Other common game theoretic axioms such as strategyproofness have also been considered. We focus on the intersection of fairness and strategyproofness in the online setting. We show that, although strategyproofness in the general sense remains elusive for temporal voting, there are at least two reasons to be optimistic: (1) a well-motivated relaxation of strategyproofness for the online setting is achievable unconditionally and (2) a strategyproof voting rule can achieve fairness asymptotically.

\subsection{Our contributions}
In section \ref{sec:strategyproofness} we study an online relaxation of strategyproofness for the online temporal voting setting that was introduced in \cite{walsh-fair-division} for fair division of indivisible goods. We show that many known online voting rules satisfy this relaxation, including Method of Equal Shares and Perpetual Phragmén, two voting rules well known for their strong proportionality guarantees. Most known voting rules are known or suspected not to be strategyproof in the general sense, thus highlighting a gap between the online and offline settings.

In section \ref{sec:pom} we introduce analysis of the price of manipulability, where we study what happens to proportionality guarantees when voters submit strategic ballots. Proportional representation guarantees are functions of voter approvals, thus if voters act strategically, it is not clear to what extent proportional fairness guarantees hold. We give lower bounds for JR and PJR when one voter acts strategically.

In section \ref{sec:asymptotic} we consider an asymptotic version of proportional temporal fairness, showing that there exists an online voting rule that gets arbitrarily close to satisfying proportional justified representation (PJR) as the number of voting rounds grows large. We then further asymptotic analysis by considering the price of manipulability as the number of rounds grows large. 

\section{Preliminaries}

\subsection{The model}

We always use $[k] = \{ 1,2,...,k\}$. $N = \{ v_i \}_{i \in [n]}$ is the set of $n$ voters. $T \in \mathbb{N}$ is the number of rounds of voting in an instance. Voters submit approval ballots, where they either approve or not of each available alternative. In round $t$, the set of available alternatives is $C_t$. Voter $v_i$'s approvals in round $t$ are $A_t(v_i) \subseteq C_t$. $A^t(v_i) = \onstratdef{A_t(v_i)}$ is the $t$-tuple of voter $i$'s approvals in rounds $1,2,\dots,t$. The approval profile $A_t = (A_t(v_1), A_t(v_2),\dots, A_t(v_n))$ is the $n$-tuple of approvals of all voters in iteration $t$. $A^t = (A^1, A^2, \dots, A^t)$ is the $t$-tuple of all approval profiles in rounds $1,2,...,t$. Similarly, $C^t = (C_1, C_2,\dots, C_t)$ is the $t$-tuple of sets of alternatives available in rounds $1,2,\dots, t$. An instance is defined by the tuple $(N,A^T,C^T)$.

A voting rule $\vr$ maps (sub)instances $(N,A^T,C^T)$ to a vector of elements $c_t \in C_t$ representing the winning alternative in each round. Thus $\vr(N,A^T,C^T) = \bw \in C_1 \times \dots \times C_t$. Generally, $N$ and $C^T$ are either fixed or clear from the context and are omitted in order to simplify notation, thus we write $\vr(A^T)$. $\vr(A^T)_k$ denotes the winning alternative in round $k$.  

We often consider the union or intersection of the approvals of groups of voters in some particular round. We thus define $\acup_t(G) = \cup_{v\in G}A_t(v)$ and $\acap_t(G) = \cap_{v\in G}A_t(v)$ for $G \subseteq N$. We often consider two different approval profiles that differ only in the approvals of voter $v_i$. $\atti$ is the approval profile with voter $v_i$'s approvals removed in each round, holding all other approvals constant, that is

\[
\atti = ([A_1(v_1),\dots, A_1(v_{i-1}), A_1(v_{i+1}),\dots A_1(v_n)],\dots, [A_T(v_1),\dots, A_T(v_{i-1}), A_T(v_{i+1}),\dots A_T(v_n)])
\]

Thus for some voter $v_i$ with two different approval profiles $A^T(v_i)$ and $\tilde{A}^T(v_i)$, we have $A^T = (\atti, A^T(v_i))$ and $\tilde{A}^T = (\atti, \tilde{A}^T(v_i))$, two approval profiles differing only in voter $v_i$'s approvals. 

Unless stated otherwise, we assume the online setting, and all voting rules are deterministic. This means that the alternative sets and approval sets arrive online, simultaneously\footnote{In an applied setting, the alternatives would arrive first and voters would then submit their approvals. However, in the online setting we give the adversary the power to choose the entire input sequence, thus both alternatives and approvals arrive online simultaneously.}. For this paper we assume complete approvals, that is that all voters have non-empty approvals in all rounds. This assumption rests on the interpretation of empty approvals; if a voter has empty approvals or approves of all alternatives, then this can be interpreted as indifference to the outcome. Although it is a restriction, we believe it is natural and captures a large number or real-life applications. More on this assumption can be found in the model of \cite{welfare_strategy}. In the online setting, a voting rule $\vr$ is a function of the current and all previously seen alternative sets and approval sets, and it makes decisions in each round as though it were the last round. One consequence of this is that any instance $(N,A^T,C^T)$ can be truncated to an instance $(N,A^t,C^t)$ for $t \leq T$ and be a valid instance, and the outcomes for rounds $1,\dots,t$ are identical to the outcomes in the first $t$ rounds of $\vr(A^T)$. 

\subsection{Definitions}
\begin{definition}[Satisfaction]
    The satisfaction of a group of voters over an outcome sequence is the number of rounds in which some voter in the group approves of the outcome. Formally, for group $G\subseteq N$ and approval profile $A^T$, $\sat(G,\vr(A^T)) = |\{ t : \vr(A^T)_t \in \acup_t(G) \}|$. We subscript sat to denote the satisfaction of a group in some round $t$, as in $\sat_t(G,A^T) = 1$ if $\exists\, v \in G$ such that $\vr(A^T)_t \in A_t(v)$, else 0.
\end{definition}

All proportionality axioms are defined in terms of lower bounds on the satisfaction of groups of voters. In order to identify meaningful groups of voters for whom there should be satisfaction lower bounds, we define the following property:

\begin{definition}[$\ell$-cohesiveness]
   A group $G \subseteq N$ is said to be $\ell$-cohesive in $A^T$ if there exists a set of rounds $R \subseteq [T]$ such that $|R| = \ell$ and $\acap_r(G) \neq \emptyset$ for all $r \in R$. Note that the rounds need not be contiguous.
\end{definition}

$\ell$-cohesive groups are simply groups in which all voters agree on some nonempty set of alternatives in some number of rounds. Using this notion, we can define precise proportional representation axioms. In this paper we pay special attention to the justified representation program originating in multi-winner approval voting \cite{aziz}, \cite{pjr} and later adapted to temporal voting \cite{justified_representation}, \cite{chandak}, \cite{strengthening_proportionality}.

\begin{definition}[Justified Representation (JR)]
A voting rule satisfies JR if for all inputs $(N,A^T,C^T)$ and all $G\subseteq N$ that are $\ell$-cohesive in $A^T$, it holds that $\sat(G,\vr(A^T)) \geq \min(1,\flr{\ell \cdot \frac{|G|}{n}})$.
\end{definition}

JR captures the idea that any cohesive group that is sufficiently large or cohesive over sufficiently many rounds should be represented at least once. It is the weakest of the proportionality axioms considered.

\begin{definition}[Proportional Justified Representation (PJR)]
A voting rule satisfies PJR if for all instances $\instance$ and all groups $G\subseteq N$ that are $\ell$-cohesive in $A^T$, it holds that $\sat(G,\vr(A^T)) \geq \flr{\ell\cdot \frac{|G|}{n}}$.
\end{definition}

PJR captures the idea that any group that is cohesive over $\alpha \%$ of rounds and makes up $\beta \%$ of the voters should approve of at least $(\alpha\cdot \beta)\%$ of the outcomes. 

\begin{definition}[Extended Justified Representation (EJR)]
A voting rule satisfies EJR if for all instances $\instance$ and all $G\subseteq N$ that are $\ell$-cohesive in $A^T$, it holds that there exists some voter $v_i \in G$ such that $\sat(v_i, \vr(A^T)) \geq \flr{\ell\cdot \frac{|G|}{n}}$.
\end{definition}

EJR ensures that any $\ell$-cohesive group has at least one voter who has satisfaction greater than or equal to the lower bound. Contrast this with PJR, which has no specific guarantees for individual voters in an $\ell$-cohesive group.

An axiom is said to be in its weak form if it is satisfied for $\ell = T$, i.e. for groups who agree in all rounds, but not necessarily for $\ell < T$. We denote this wJR, wPJR and wEJR. If the axiom is satisfied for any $\ell \leq T$, then it is in its strong form. Clearly, EJR $\implies$ PJR $\implies$ JR, all strong forms imply their weak forms, and wEJR $\implies$ wPJR $\implies$ wJR.

We define the functions $JR(G, A^T)$, $PJR(G, A^T)$ and $EJR(G, A^T)$ as mapping an approval profile sequence $A^T$ and a group $G$ that is $\ell$-cohesive in $A^T$ to the lower satisfaction bound of the respective axiom. For example, $JR(G,A^T) = \min(1, \flr{\ell \cdot \frac{|G|}{n}})$. 

\section{Related work}

Lackner formalizes a framework\footnote{Lackner uses the name perpetual voting; both perpetual and temporal voting appear in the literature, describing essentially the same framework.} \cite{perpetual_voting} which highlights the problem of perpetually ignoring some group of voters in sequential voting.  There is considerable conceptual and technical overlap with the multi-winner approval voting problem \cite{multi_winner}. In multi-winner approval voting, voters submit approval ballots over a set of $m$ candidates, and a voting rule selects $k < m$ winning candidates. Multi-winner approval voting literature has focused much on identifying groups of voters that should qualify for representation and determining the minimal amount of representation to which they are entitled. Aziz et al. \cite{aziz} introduce the JR and EJR axioms for multi-winner approval voting. Fernandez et al. \cite{pjr} introduce PJR. 

Temporal voting is a sequence of rounds of single-winner approval votes where we seek to identify like-minded groups and guarantee representation for them over time, rather than in each round. Bulteau et al. \cite{justified_representation} first provide natural adaptations of JR and PJR for temporal voting. For a detailed survey and analysis of proportional representation axioms, in particular justified representation variants, in temporal voting, we refer the reader to Phillips et al. \cite{strengthening_proportionality}. Lackner adapts a number of voting rules, including Perpetual Phragmén, from multi-winner approval voting to the temporal voting problem in \cite{perpetual_voting}, \cite{axiomatic_lens}, \cite{proportional_decisions}. Chandak et al. \cite{chandak} show that two previously known online voting rules, Perpetual Phragmén and Method of Equal Shares (MES; originally introduced in \cite{mes} for multi-winner approval voting), satisfy PJR and wEJR, respectively (with the caveat that MES is semi-online, meaning that it must know the total number of rounds from the beginning). They also show that an offline algorithm satisfies EJR, and leave the question open as to whether any online voting rule satisfies wEJR. Elkind et al. \cite{verifying_proportionality} partially answer this question showing that that no semi-online (and therefore no online) voting rule satisfies strong EJR. 
 
There are conclusive results on the existence of strategyproof, proportionally representative voting rules in multi-winner approval voting. Peters \cite{multi-sp} provides an interesting computer-assisted proof that no multi-winner approval voting rule satisfies both SP and a proportionality axiom even much weaker than, and implied by, wJR. It would seem reasonable to conjecture that the same is true for temporal voting, but as of this writing there is no proof of this, nor is there a full characterization of strategyproof temporal voting rules. With that said, strong results do exist. Elkind et al. \cite{welfare_strategy} study strategyproofness in the context of welfare maximization. They show that approval voting (AV), the simple voting rule that selects an alternative with maximal approvals each round, is strategyproof and maximizes utilitarian welfare (essentially the sum of voters approving of the outcome over all rounds). In contrast, they prove that any deterministic voting rule that maximizes egalitarian welfare (maximizing the minimal satisfaction of any voter) fails even a relaxation of strategyproofness.
 
Lackner et al. \cite{free-riding} have studied the phenomenon of free-riding in sequential decision-making. Free-riding is when a voter withholds submitting truthful approvals for popular candidates in the hopes of appearing under-represented. This strategy is of course useful when a voting rule has proportionality guarantees: if an approved alternative will win regardless of whether some individual approves of it, a strategic voter can in effect pretend to be unsatisfied until such time as the voting rule must recognize them as an under-represented voter and give them more representation than they are entitled to, given that they have actually already approved a potentially large number of outcomes. Lackner et al. provide two simple conditions that imply the possibility of free-riding, and by implication, manipulability. Although justified representation axioms do not seem to necessarily imply these conditions, in practice, it does seem to be the case that most known voting rules that satisfy justified representation variants do. This supports the conjecture that SP and proportionality are, as in multi-winner approval voting, mutually exclusive.

Kozachinskiy et al. \cite{optimal_bounds} perform a type of asymptotic analysis in that they bound dissatisfaction for individual voters, where dissatisfaction is the number of rounds in which they do not approve the result. As far as we know, no other work has been done on asymptotic analysis in temporal voting. Many asymptotic results exist in computational social choice, for example in Benade et al. \cite{benade}. We note, however, that our analysis does not involve randomization, but is carried out on a deterministic voting rule. 

Finally, Elkind et al. \cite{welfare_strategy} carry out an analysis of price of proportionality. This is the ratio between an outcome optimal in terms of a welfare objective and one optimal in terms of the same welfare objective but under the constraint the the outcome must be proportional. In this case, proportionality is roughly defined as guaranteeing each voter at least $1/n$ of their maximum utility. 
 
\section{The existence of fair and truthful online temporal voting rules} \label{sec:strategyproofness}

Given knowledge of how a voting rule selects winning candidates, a voter may declare false approvals to cause the voting rule to select more candidates of which the voter approves than otherwise would have been selected, had truthful approvals been declared. A voting rule that does not allow for such strategic manipulation is called strategyproof or truthful. Formally,

\begin{definition}[Strategyproofness (SP)]
    A voting rule $\vr$ is SP if for any input instance $\instance$ and any voter $v_i \in N$ with truthful approvals $A^T(v_i)$, for any $\tstrat \neq A^T(v_i)$ it holds that 
\[
\sat(v_i, \vr(\atti, \tstrat)) \leq \sat(v_i, \vr(A^T))
\]
\end{definition}

SP is a strong property that few voting rules of interest satisfy. Simple approval voting (AV), where in each round an alternative with maximum approvals is selected, is know to be SP \cite{welfare_strategy}. Conceptually, any voting rule that is a variation of AV with arbitrary weights on voters would also be SP, as would any number of trivial voting rules (e.g. ignore all approvals and select an arbitrary alternative in each round). However, the canonical objective in temporal voting is to design voting rules that satisfy fairness axioms, which AV and equivalent rules fail to satisfy (the word problem of the friends meeting for dinner shows this). As far as we know, there are no known voting rules, online or otherwise, that satisfy both SP and some non-trivial proportional representation axiom.

In the online setting, we can define a relaxation of SP. A voting rule is online strategyproof if in any round, fixing all previous rounds and all other voter approvals in the current round, no voter can declare false approvals so as to gain greater satisfaction in the current round.

\begin{definition}[Online strategyproofness (OSP)]
A voting rule $\vr$ is OSP if for any input instance $\instance$, any round $t \leq T$ and any voter $v_i$ with truthful approvals $A^t(v_i)$, for any $\strat \neq A^t(v_i)$ where $\strat = \onstratdef{\tilde{A}_t(v_i)}$, it holds that 
\[
\sat_t(v_i, \vr(\ati, \strat)) \leq \sat_t(v_i, \vr(A^t))
\]
\end{definition}

OSP captures the myopic nature of the online setting. If there is much uncertainty about future rounds, or if voters have low inter-temporal substitution preferences (i.e. they mainly only care about what is happening right now), then OSP is well-motivated. We now adapt two familiar properties from social choice and algorithmic game theory to our setting.

\begin{definition}[Monotonicity]
A temporal voting rule $\vr$ is monotone if for all instances $\instance$, all $t \leq T$ and all $v_i \in N$, if $\vr(A^t)_t = c$ and $\onstrat = (A_1(v), A_2(v),...,A_t(v) \cup \{ c \})$ then $\vr(\ati, \onstrat)_t = c$.
\end{definition}

Monotonicity is the property that if for any instance the outcome in round $t$ is $c$, then adding $c$ to the approval set of any voter in that round does not change the outcome. We now show that monotonicity is a necessary condition for OSP.

\begin{lemma}
If an online voting rule is not monotone, then it is not OSP.    
\end{lemma}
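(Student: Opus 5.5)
The plan is to prove the contrapositive directly: from a witness of non-monotonicity, build an explicit profitable online deviation. Suppose $\vr$ is not monotone. Then there are an instance $\instance$, a round $t \leq T$, a voter $v_i$ and an alternative $c$ with $\vr(A^t)_t = c$, such that if $v_i$'s round-$t$ approval set is replaced by $A_t(v_i)\cup\{c\}$, the round-$t$ outcome becomes some $c' \neq c$. Write $\hat{A}^t$ for this modified profile. It agrees with $A^t$ on all rounds before $t$ and on all other voters in round $t$.

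The idea is to reverse the roles of the two profiles and treat $\hat{A}^t$ as the truthful one. The truthful approvals of $v_i$ in round $t$ are then $A_t(v_i)\cup\{c\}$, and the deviation $\tilde{A}_t(v_i) = A_t(v_i)$ reproduces the profile $A^t$. The steps are as follows.

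\begin{enumerate}
\item Note that $(N,\hat{A}^t,C^t)$ is a valid instance: approvals remain nonempty and contained in $C_t$, since $c \in C_t$.
\item Since rounds $1,\dots,t-1$ are identical in both profiles, the online nature of $\vr$ (truncation) makes the earlier outcomes coincide. So $\strat = \onstratdef{A_t(v_i)}$ is exactly of the form allowed in the definition of OSP, with $\hat{A}^t_{-i} = \ati$.
\item Under the deviation the outcome in round $t$ is $\vr(A^t)_t = c$, which lies in $v_i$'s truthful set $A_t(v_i)\cup\{c\}$. Hence the deviating satisfaction is $1$.
\item Under truthful reporting the outcome is $c'$. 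The voter's truthful satisfaction is $0$ precisely when $c' \notin A_t(v_i)\cup\{c\}$.
\end{enumerate}

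The main obstacle is step 4. If $c' \in A_t(v_i)$, the voter is already satisfied under the truthful profile and the reversal gives no violation. I would first try the reversed direction as the primary argument. In the bad case I would instead take $A^t$ as truthful and consider $v_i$ deviating by adding $c$, but that yields satisfaction equal to truthful, not greater.

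So if the simple construction fails, I would look for a different voter or profile. One option is to pick a voter for whom $c$ is not approved but whose approvals otherwise match. Another is to change the definition's witness by first removing $c'$ from $v_i$'s truthful set, but that alters the instance and so needs care.

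Most likely the intended argument is the reversal above, taking the truthful set to be $\{c\}\cup A_t(v_i)$ and implicitly assuming $c' \notin A_t(v_i)$. I would state that case cleanly and flag the remaining case as needing either a refined witness or a slightly stronger notion of non-monotonicity.
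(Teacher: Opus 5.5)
Your first case is correct and is the paper's first case. When $c' \notin A_t(v_i)$, you treat $A_t(v_i)\cup\{c\}$ as truthful, and the report $A_t(v_i)$ moves the round-$t$ outcome from $c'$ to $c$, a gain from $0$ to $1$.

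There is a genuine gap in the case $c' \in A_t(v_i)$. You leave it open, and you reject the argument that closes it. Your claim is that, with $A^t$ truthful, adding $c$ ``yields satisfaction equal to truthful.'' That claim is false. The observation you are missing is that necessarily $c \notin A_t(v_i)$. If $c$ were already in $A_t(v_i)$, then $A_t(v_i)\cup\{c\} = A_t(v_i)$ and the two profiles would coincide. A deterministic rule cannot output $c$ on one and $c' \neq c$ on the other.

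It follows that under truthful $A^t$ the round-$t$ winner $c$ is not approved by $v_i$, so $\sat_t = 0$. Under the report $A_t(v_i)\cup\{c\}$ the winner becomes $c'$, and $c' \in A_t(v_i)$, so $\sat_t = 1$ with respect to the true approvals. That is a strict gain, and it is exactly the paper's second case. This deviation is also admissible for OSP. It changes only round $t$, and it differs from the truth precisely because $c \notin A_t(v_i)$.

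So you do not need a refined witness, a different voter, or a stronger notion of non-monotonicity. The lemma holds as stated with the two-case split:
\begin{itemize}
\item if $c' \notin A_t(v_i)$, reverse the roles and take $A_t(v_i)\cup\{c\}$ as the truthful set;
\item if $c' \in A_t(v_i)$, keep $A^t$ truthful and let $v_i$ deviate by adding $c$.
\end{itemize}
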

\begin{proof}
Suppose a voting rule $\vr$ is not monotone. Then there exist $\instance$, $t \leq T$ and $v_i \in N$ such that $\vr(A^t)_t = c$ but $\vr(\ati, \strat)_t = c'$ for some $c' \neq c$ and $\strat = \onstratdef{A_t(v_i) \cup \{c\}}$. Note that $c \not\in A_t(v_i)$, because if $A_t(v_i) \cup \{ c \} = A_t(v_i)$, then $\vr(A^t)_t = \vr(\ati, \strat)_t$. We now need to show that a strategy exists in round $t$ that gets $v_i$ strictly greater satisfaction, for any truthful approvals. 

We consider two cases, $c' \not\in A_t(v_i)$ and $c' \in A_t(v_i)$. If $c' \not\in A_t(v_i)$, then for voter $v_i$ with true approvals $\strat = \onstratdef{A_t(v_i) \cup \{ c \}}$, $\sat_t(v_i, \vr(A^t)) > \sat_t(v_i, \vr(\ati, \strat))$. Similarly, if $c' \in A_t(v_i)$, then for voter $v_i$ with true approvals $A^t(v_i)$ (remember that $c \not\in A_t(v_i)$), $\sat_t(v_i, \vr(\ati, \strat)) > \sat_t(v_i, \vr(A^t))$. In all cases, we have shown a strategy exists for a voter to derive greater satisfaction in round $t$ by submitting false approvals, therefore $\vr$ is not OSP.
\end{proof}

It remains an open question whether monotonicity is a sufficient condition for OSP. We now adapt independence of irrelevant alternatives to the online temporal voting setting, and show that it is a sufficient condition for OSP.

\begin{definition}[Online independence of irrelevant alternatives (OIIA)]
A voting rule $\vr$ satisfies OIIA if for all $\instance$, $t \leq T$ and $v_i \in N$, if $\vr(A^t)_t = c$ then for $\onstrat = \onstratdef{A_t(v_i) \setminus \{ d \}}$, it holds that $\vr(\ati, \onstrat)_t = c$ for all $d \neq c$. 
\end{definition}

If a voting rule is OIIA, then in any round, if a voter removes a non-winning alternative from its approval set, the result is unchanged. A useful implication of OIIA is that if any voter adds an approval $d$ to their approval set in any iteration $t$, then if the outcome is changed it is changed to $d$. 

\begin{lemma} \label{lemma:oiia}
If a voting rule $\vr$ satisfies OIIA, then for any instance $\instance$, $t \leq T$, voter $v_i \in N$, $d \in C_t$, $d\not\in A_t(v_i)$ and $\strat = \onstratdef{A_t(v_i) \cup \{ d \}}$, if $\vr(A^t)_t \neq \vr(\ati, \strat)_t$ then $\vr(\ati, \strat)_t = d$.
\end{lemma}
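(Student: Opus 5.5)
The plan is a proof by contradiction that applies OIIA once, to the modified profile rather than the original one. Fix the instance, round $t$, voter $v_i$, an alternative $d\in C_t\setminus A_t(v_i)$, and the deviation $\strat = \onstratdef{A_t(v_i)\cup\{d\}}$. Write $c = \vr(A^t)_t$ and $c' = \vr(\ati,\strat)_t$, and assume $c\neq c'$. Suppose, for contradiction, that $c'\neq d$.

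The key step is to regard $(\ati,\strat)$ as a legitimate instance in its own right, truncated to round $t$. In that instance voter $v_i$'s approval set in round $t$ is $A_t(v_i)\cup\{d\}$, and the winner in round $t$ is $c'$. Since $d\neq c'$, OIIA applies to this instance with the removed alternative $d$. It says that replacing $v_i$'s round-$t$ approvals by $(A_t(v_i)\cup\{d\})\setminus\{d\}$ leaves the round-$t$ winner equal to $c'$.

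Next, note that $(A_t(v_i)\cup\{d\})\setminus\{d\} = A_t(v_i)$, because $d\notin A_t(v_i)$. Rounds $1,\dots,t-1$ and all other voters' approvals are identical in the two profiles. So the profile obtained is exactly $A^t$, and OIIA gives $\vr(A^t)_t = c'$. This contradicts $c\neq c'$, so $c'=d$, as claimed.

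The argument is short. The only point I expect to need care is checking that OIIA may be applied at the perturbed profile $(\ati,\strat)$. OIIA is quantified over all instances, and the online model makes truncations and modified profiles valid instances, so this is legitimate. A second point to check is the nonemptiness assumption on approvals: the reduced set is $A_t(v_i)$, which is nonempty by the complete-approvals assumption, so the removal stays within valid instances.
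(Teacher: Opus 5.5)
Your proof is correct and follows the same route as the paper: assume the new winner $c'$ differs from $d$, apply OIIA at the perturbed profile to remove $d$, and observe that this recovers the original profile, whose round-$t$ winner $c \neq c'$ yields the contradiction. Your two extra checks are sound and slightly more careful than the paper's one-line argument: that OIIA may be invoked at the perturbed profile, and that the reduced approval set $A_t(v_i)$ is nonempty.
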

\begin{proof}
Suppose a voting rule $\vr$ is OIIA, but that there exists some instance $\instance$ and $t\leq T$ such that $\vr(A^T)_t =c$ but $\vr(\ati, \strat)_t = c' \neq c$ for some $\strat = \onstratdef{A_t(v_i) \cup \{d\}}$, $v_i \in N$ and $d \neq c'$, $d \not\in A_t(v_i)$. Then we have a violation of OIIA, as removing $d$ from $\tilde{A}_t(v_i)$ changes the outcome from $c' \neq d$ to $c \neq c'$.
\end{proof}

\begin{theorem}
    OIIA implies OSP.
\end{theorem}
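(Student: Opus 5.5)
We argue by contradiction: assume $\vr$ satisfies OIIA and show that no profitable one-round deviation exists. Fix an instance $\instance$, a round $t \leq T$, a voter $v_i$ with truthful approvals $A^t(v_i)$, and a deviation $\strat = \onstratdef{\tilde{A}_t(v_i)}$. Only round $t$ differs, since the earlier approvals are held fixed. Satisfaction in round $t$ is $0$ or $1$, so a profitable deviation would mean the following. The truthful outcome $c = \vr(A^t)_t$ satisfies $c \notin A_t(v_i)$, while the manipulated outcome $c' = \vr(\ati, \strat)_t$ satisfies $c' \in A_t(v_i)$. In particular $c \neq c'$. The plan is to pass from $\tilde{A}_t(v_i)$ to $A_t(v_i)$ one alternative at a time and track the winner, using OIIA for removals and Lemma \ref{lemma:oiia} for additions.

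First step: remove alternatives. Starting from $\tilde{A}_t(v_i)$, remove one by one every $d \in \tilde{A}_t(v_i) \setminus A_t(v_i)$. Each such $d$ is outside $A_t(v_i)$, and the current winner $c'$ lies in $A_t(v_i)$, so $d \neq c'$ at every removal. OIIA then keeps the winner equal to $c'$ throughout. After this phase the ballot is $\tilde{A}_t(v_i) \cap A_t(v_i)$, and the winner is still $c'$.

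Second step: add alternatives. Add one by one every $d \in A_t(v_i) \setminus \tilde{A}_t(v_i)$ until the ballot equals $A_t(v_i)$. By Lemma \ref{lemma:oiia}, whenever an addition changes the winner, the new winner is the alternative $d$ just added, which lies in $A_t(v_i)$. So by induction the winner stays in $A_t(v_i)$ after every addition. At the end the ballot is the truthful one, so the winner is $\vr(A^t)_t = c \notin A_t(v_i)$. This contradicts the invariant, and the theorem follows.

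The main obstacle is bookkeeping. Lemma \ref{lemma:oiia} and the OIIA definition are each stated relative to some base profile. I need to check that each intermediate ballot is a legitimate profile, namely that it is nonempty under the complete-approvals assumption. I also need to apply each single-step statement with the current intermediate profile as its base.

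To handle nonemptiness, I will do the additions before the removals: first add all of $A_t(v_i) \setminus \tilde{A}_t(v_i)$, then remove all of $\tilde{A}_t(v_i) \setminus A_t(v_i)$. Every intermediate ballot then contains $A_t(v_i)$, which is nonempty. With this order the invariant must be restated. During the additions, the winner either stays at $c'$ or becomes an added element, so it remains in $A_t(v_i)$. The removed elements lie outside $A_t(v_i)$, so none of them equals the current winner, and OIIA preserves the winner during the removals. Thus the winner stays in $A_t(v_i)$ throughout, and the same contradiction results.
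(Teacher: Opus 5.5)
Your proof is correct and is essentially the paper's argument run in reverse. The paper edits the truthful ballot into the deviation and keeps the winner outside $A_t(v_i)$; you edit the deviation back into the truthful ballot and keep the winner inside $A_t(v_i)$, using OIIA for removals and Lemma~\ref{lemma:oiia} for additions in exactly the same roles. Your additions-first ordering also handles a nonemptiness point the paper does not address, though during the addition phase the intermediate ballots contain $\tilde{A}_t(v_i)$, which is nonempty by the complete-approvals assumption, rather than $A_t(v_i)$; only in the removal phase do they contain $A_t(v_i)$.
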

\begin{proof}
Fix any voting rule $\vr$ that satisfies OIIA, input instance $(N,A^T,C^T)$, $v_i \in N$ and $t \leq T$. $\vr(A^t)_t = c$. We assume that $A^t(v_i)$ is truthful and this is without loss of generality, as it holds for any possible truthful approvals of any voter in any instance. Suppose that $c \in A_t(v_i)$, then $\sat_t(v_i, \vr(A^t)) = 1 \geq \sat_t(v_i, \vr(\ati, \onstrat))$ for all $\ostrat$ and $\onstrat = \onstratdef{\ostrat}$. Assume then that $c \not\in A_t(v_i)$. Suppose there exists some approval set $\ostrat$ such that $\vr(\ati, \strat)_t = c' \neq \vr(A^t)_t$, then there exists a sequence of edits to $A_t(v_i)$ that are either removals from or additions to the set $A_t(v_i)$. Wlog, every edit in the sequence happens only once, i.e. no alternative is added and then later removed (or removed and later added)\footnote{This is without loss of generality because for any sequence in which some alternative $d$ is added followed by other edits and then $d$ is removed, the resulting set is identical to the one resulting from the sequence in which $d$ was never added to being with. The same is clearly true for the case of removing an alternative and adding it back later.}. In any step of the sequence where some alternative $d \not\in A_t(v_i)$ is added, then by lemma \ref{lemma:oiia}, either the outcome does not change or it changes to $d$. In any step of the sequence, if some alternative $d$ is removed, then it must be an alternative $d \in A_t(v_i)$, and the outcome must be unchanged since the original winning outcome was $c \not\in A_t(v_i)$ and every addition $d'$ can only change the outcome to some $d' \not\in A_t(v_i)$. In all cases, the outcome is either unchanged and remains $c \not\in A_t(v_i)$, or it is changed to $d \not\in A_t(v_i)$, resulting in no gain in satisfaction and thus $\sat_t(v_i, \vr(\ati, \strat)) = 0 = \sat_t(v_i, \vr(A^t))$.
\end{proof}

\begin{proposition}
    OIIA is not a necessary condition for OSP.
\end{proposition}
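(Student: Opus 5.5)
To prove the proposition, I will exhibit a concrete online voting rule that satisfies OSP but violates OIIA. The natural candidate is a variant of approval voting whose tie-breaking depends on the whole approval profile. Plain AV should satisfy OIIA when ties are broken by a fixed order on alternatives, so I will add a non-IIA ingredient that still cannot be exploited by a single voter. Concretely, I take the rule $\vr$ that in round $t$ selects an alternative with maximum approval count, but breaks ties in favour of the tied alternative with the \emph{fewest} approvals from some fixed distinguished voter set, or more simply by the total number of approvals cast in round $t$. A cleaner option is the following. In round $t$, let $M$ be the set of alternatives with maximum approval score. If $|M|=1$, pick it. If $|M|\geq 2$, pick the lexicographically first element of $M$ when the total number of approvals $\sum_j |A_t(v_j)|$ is even, and the lexicographically last element otherwise.

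The first step is to show that OIIA fails. I will use a small instance with $n=2$ voters in a single round and $C_1=\{a,b,c\}$. Let $A_1(v_1)=\{a\}$ and $A_1(v_2)=\{b,c\}$. Then $M=\{a,b,c\}$ with three approvals in total (odd), so the rule selects $c$. Removing the non-winning alternative $b$ from $v_2$'s ballot leaves $M=\{a,c\}$ with two approvals in total (even), so the rule now selects $a$. The outcome changes, so OIIA is violated.

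The second step, and the main obstacle, is proving OSP. Fix round $t$ and voter $v_i$ with truthful set $A_t(v_i)$, and let $c$ be the truthful winner. If $c\in A_t(v_i)$, there is nothing to show. Otherwise $c\notin A_t(v_i)$, so $c$ has maximum score $s$ among all alternatives, and every $d\in A_t(v_i)$ has score at most $s$. For $v_i$ to gain, the manipulated ballot must make some $d\in A_t(v_i)$ win. Any misreport can raise $d$'s score by at most $0$, since $d$ is already approved; it can only lower $c$'s score by at most $0$ as well, since $v_i$ does not approve $c$. The report can, however, raise scores of other alternatives, or lower $d$'s. So $d$ can win only if it remains tied with $c$ at the top, which requires $d$ to have been tied with $c$ truthfully. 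The parity tie-break is where the danger lies. By changing $|\tilde A_t(v_i)|$, voter $v_i$ can flip the parity, and the lexicographic direction, without changing the scores of $c$ and $d$, for instance by adding or dropping an irrelevant low-score alternative. This is a genuine manipulation, so my first candidate rule fails. I therefore expect the design of the tie-break to be the crux. The tie-break must be non-IIA but must not let a voter who approves none of the truthful winner's rivals steer it.

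The repair I would try is to make the tie-break depend only on approvals of voters who approve the current winner. For example, among the tied set $M$, pick the first element in a fixed order $\sigma$ if some voter approves every element of $M$, and otherwise pick by the reverse order. Then I would recheck both steps. OSP should follow by a case analysis on whether $v_i$'s report can change $M$ or the tie-break predicate in a way that favours one of its own approvals. If that analysis breaks down, a fallback is a rule that is trivially OSP but not OIIA, such as one that selects $\min$ under a fixed order of the alternatives approved by \emph{no one} when such alternatives exist, and otherwise a fixed alternative. Completeness of approvals then needs care in checking that adding a non-winning approval can change the outcome while misreports never help.
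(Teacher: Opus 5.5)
There is a genuine gap: the proposal never produces a rule that is actually verified to be OSP and not OIIA. You correctly show that your first candidate (AV with a parity tie-break) is manipulable. However, the ``repair'' is also manipulable, and so is the fallback.

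For the repair, let $\sigma$ list the alternatives as $a,c,b$, and take ballots $A_t(v_1)=\{a,b\}$, $A_t(v_2)=\{b\}$, $A_t(v_3)=\{a\}$, $A_t(v_4)=\{c\}$. Then $M=\{a,b\}$ and $v_1$ approves all of $M$, so $a$ wins and $v_2$ gets satisfaction $0$. If $v_2$ instead reports $\{b,c\}$, all three alternatives tie at score $2$ and no voter approves all of $\{a,b,c\}$. The $\sigma$-last element $b$ then wins, giving $v_2$ satisfaction $1$.

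The mechanism is general. A voter who approves a tied loser $d$ but not the winner $w$ can reshape the tied set without disturbing the $w$--$d$ tie: it can pull in alternatives whose score is one below the maximum, or drop its own other tied approvals. So a tie-break that is sensitive to the tied set or to the ballots tends to hand that voter a lever.

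The fallback (elect the $\sigma$-least alternative approved by nobody) fails even more directly. With $C_t=\{a,b,c\}$, $\sigma$ ordering $a$ before $b$, $A_t(v_1)=\{a\}$ and $A_t(v_2)=\{c\}$, it elects $b$. If $v_1$ reports $\{c\}$, it elects $a$, so the rule rewards hiding approvals.

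The missing idea is to place the non-IIA behaviour where no one can exploit it. Make the outcome a function of a single voter's declared ballot only, always chosen from that ballot, via a choice function that violates contraction consistency. This is the paper's IrrelevantDictator: $v_1$ dictates, and when $|C_t|=3$ the choice is $\{a,b,c\}\mapsto c$ but $\{a,c\}\mapsto a$ (and so on). Other voters cannot affect the outcome, and $v_1$ obtains satisfaction $1$ by reporting truthfully, so the rule is even SP and hence OSP. Meanwhile, deleting the non-winning $b$ from $v_1$'s ballot $\{a,b,c\}$ changes the winner from $c$ to $a$, which violates OIIA.
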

\begin{proof}
Let IrrelevantDictator be the voting rule that chooses the lexicographically first element in voter $v_1$'s approval set every round, except in the case where $|C_t| = 3$, in which case it maps $v_1$'s approvals to the outcome in the following way:

\begin{center}
    $\{a\} \rightarrow \{a\},\ \{b\} \rightarrow \{b\}, \{c\} \rightarrow \{c\}$
    
    $\{ a,b \} \rightarrow \{ a \}, \{ a,c \} \rightarrow \{ a \}, \{ b,c \} \rightarrow \{ b \}$
    
    $\{a,b,c\} \rightarrow \{ c \}$
\end{center}

Clearly, IrrelevantDictator is SP as any voter $v_i$, $i \neq 1$, cannot influence the outcome therefore $\sat(v_i, A^T) = \sat(v_i, (\atti, \tstrat))$ for all $\tstrat$. As for voter 1, for any truthful approvals $A^T(v_1)$, $\sat(v_1, \vr(A^T)) = 1 \geq \sat(v_1, \vr(A^T_{-1}, \tilde{A}^T(v_1)))$ for all $\tilde{A}^T(v_1)$. IrrelevantDictator also clearly violates OIIA, because when $|C_t| = 3$, if voter 1 declares $\{a,b,c\}$, the outcome is $c$, but declaring $\{ a,b,c \} \setminus \{ b \} = \{ a,c \}$ changes the outcome to $a$, a violation of OIIA. 
\end{proof}

Although OIIA may seem like a very strong property, we show that it is natural in that many known voting rules satisfy it. This means that there exist voting rules that are OSP.

\subsection{Weighted Approval Methods (WAMs)}

WAMs are a large class of voting rules intorudced by Lackner \cite{perpetual_voting} including any rule that assigns initial weights to voters, in each round chooses an alternative with maximum score, where the score of an alternative is defined as the sum of the weights of all voters who approve of that alternative, and then updates voter weights based on their approval of the winning alternative.

{
{
\SetAlgorithmName{FRAMEWORK}{Framework}{List of Frameworks}
\begin{algorithm}[H]
\caption{Weighted Approval Method}
$\alpha_i \gets k_i$ for $i \in [n]$ and constants $k_i$

fix $f : \mathbb{R} \rightarrow \mathbb{R}$ such that $f(x) \leq x$

fix $g : \mathbb{R} \rightarrow \mathbb{R}$ such that $g(x) \geq x$

$\score_t(c) := \sum\limits_{i \in [n], c \in A_t(v_i)} \max(0, \alpha_i)$ 

\For{$t = 1,\ t \leq T$}{
$C^*_t = \underset{c}{\argmax} \{ \score_t(c) \}$

$\bw_t \gets c^* \in C^*_t$ \tcp*{breaking ties arbitrarily}

\For{$i \in \{ i : c^* \in A_t(v_i) \}$}{
$\alpha_i \gets f(\alpha_i)$ 
}

\For{$i \in \{ i : c^* \not\in A_t(v_i) \}$}{
$\alpha_i \gets g(\alpha_i)$
}

}

\end{algorithm}
}
}

Clearly, any voting rule in this framework is an online algorithm, as only information in the current and previous rounds is ever used. We now show that all WAMs are OSP.

\begin{proposition}
    All WAMs are OSP.
\end{proposition}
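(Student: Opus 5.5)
The plan is to show that every WAM satisfies OIIA and then invoke the theorem that OIIA implies OSP. Fix an instance $\instance$, a round $t \leq T$ and a voter $v_i$. The first observation is that a WAM's choice in round $t$ depends on rounds $1,\dots,t-1$ only through the weight vector $(\alpha_1,\dots,\alpha_n)$ at the start of round $t$. The profiles $A^t$ and $(\ati, \onstrat)$ with $\onstrat = \onstratdef{A_t(v_i)\setminus\{d\}}$ agree on the first $t-1$ rounds. Because the rule is deterministic, the earlier outcomes and hence the weights entering round $t$ are identical in both profiles. Therefore only the scores in round $t$ can differ.

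Next I would compare the round-$t$ scores. Let $c = \vr(A^t)_t$ and $d \neq c$. Removing $d$ from $A_t(v_i)$ leaves $\score_t(c)$ unchanged, since the contribution $\max(0,\alpha_i)$ of $v_i$ to $c$ is unaffected. It leaves $\score_t(e)$ unchanged for every $e \neq d$. It can only decrease $\score_t(d)$, by $\max(0,\alpha_i) \geq 0$. Hence $c$ still has maximum score, and $C^*_t$ under the new profile is either the old $C^*_t$ or the old $C^*_t$ with $d$ removed. In both cases it contains $c$.

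The main obstacle is tie-breaking. The framework says ties are broken ``arbitrarily'', and a pathological tie-breaking rule could choose a different element of the maximizing set once $d$ leaves it. I would handle this by fixing the tie-breaking convention explicitly: a deterministic rule that picks from $C^*_t$ according to a fixed priority order over alternatives, possibly depending on the history but not on the current approvals. Under such a rule, $c$ was the highest-priority element of the old $C^*_t$. The new maximizing set is contained in the old one and still contains $c$, so $c$ remains the highest-priority element and is still selected. More generally, any tie-breaking that satisfies independence of irrelevant alternatives on the maximizing set works. I would state this assumption as part of the proposition, since without it the claim can fail.

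With OIIA established, the theorem that OIIA implies OSP gives the result directly. I would add a short remark that the conditions $f(x)\leq x$ and $g(x)\geq x$ are never used. Only the structure ``maximize a nonnegative weighted approval score'' matters.
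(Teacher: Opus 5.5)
Your proposal is correct and takes essentially the same route as the paper: the weights entering round $t$ are fixed by the common history, so deleting $d \neq c$ from $A_t(v_i)$ changes only the score of $d$, which can only go down, and therefore $c$ stays in the maximizing set and still wins the tie-break (giving OIIA), after which the theorem that OIIA implies OSP finishes the proof. The paper simply assumes lexicographic tie-breaking and treats $\alpha_i \leq 0$ as a separate case, so your fixed-priority tie-breaking assumption and your uniform use of $\max(0,\alpha_i) \geq 0$ are a slightly more careful version of the same argument.
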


\begin{proof}
We prove this by showing that all WAMs satisfy OIIA. We assume lexicographical tie-breaking\footnote{Lexicographical tie-breaking is not uniquely required. Other deterministic tie-breaking rules that are not functions of the voting rule's past decisions could also work just as well.}. Let $\vr$ be any WAM, and fix any instance $(N,A^T,C^T)$, $t \leq T$ and any voter $v_i \in N$. We observe the effect of removing from $v_i$'s approval set a losing alternative. In round $t$, $\vr(A^t)_t = a$. Let $\tilde{A}^t = (\ati, \onstrat)$ and $\onstrat = \onstratdef{A_t(v_i) \setminus \{ d \}}$ for some $d \in A_t(v_i)$, $d \neq a$. If $\alpha_i \leq 0$, then $v_i$ does not contribute to the score of any alternative and $\vr(\tilde{A}^t) = \vr(A^t)$, thus we assume that $\alpha_i > 0$. Clearly, 

\[
\sum\limits_{j\in[n], c \in A_t(v_j)} \max(0, \alpha_j) = \sum\limits_{j\in[n], c \in A_t(v_j)} \max(0, \alpha_j)
\]

for all $c \neq d$ and 

\[
\sum\limits_{j\in[n], d \in \tilde{A}_t(v_j)} \max(0, \alpha_j) < \sum\limits_{j\in[n], d \in A_t(v_j)} \max(0, \alpha_j)
\]

The only score that changes is that of $d$, which is strictly less than when $v_i$ declared $A^t(v_i)$, so it cannot be minimal and all other alternatives have unchanged scores and must lose tie-breaking as when $A^t(v_i)$ was declared. Therefore $\vr(\tilde{A}^t)_t = \vr(A^t)_t$ and $\vr$ satisfies OIIA.
\end{proof}

\subsection{GreedyJR}

GreedyJR is a WAM with $f(\alpha) = 0$ and $g(\alpha) = \alpha$ that initializes all weights to 1. It is in effect very similar to the GreedyCC algorithm described in \cite{justified_representation}, which differs only slightly because of some assumptions on approvals. GreedyJR can be viewed as a greedy algorithm that picks an alternative with max score in each round, and then removes from consideration in future rounds all voters who approved of that alternative. 

\begin{proposition}
    GreedyJR satisfies JR.
\end{proposition}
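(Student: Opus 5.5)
The plan is to argue by contradiction, using the fact that in GreedyJR every voter's weight is always either $1$ or $0$. Fix an instance $\instance$ and a group $G \subseteq N$ that is $\ell$-cohesive in $A^T$, with cohesive rounds $R \subseteq [T]$, $|R| = \ell$. If $\flr{\ell \cdot \frac{|G|}{n}} = 0$ the JR bound is $0$ and there is nothing to prove. So assume $\ell \cdot |G| \geq n$, and suppose for contradiction that $\sat(G, \vr(A^T)) = 0$, i.e.\ in no round does any voter of $G$ approve the winner.

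First I would record the weight dynamics. All weights start at $1$. A voter who approves the winner gets $f(\alpha) = 0$ and stays at $0$ forever, since $f(0)=0$ and $g(0)=0$. A voter who does not approve the winner keeps her weight, since $g(\alpha)=\alpha$. Hence $\score_t(c)$ is exactly the number of still-active voters (weight $1$) who approve $c$ in round $t$. Under the contradiction hypothesis, every voter in $G$ has weight $1$ in every round.

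The key counting step concerns each round $r \in R$. Pick some $c \in \acap_r(G)$; then $\score_r(c) \geq |G|$. So the winner $\bw_r$ has score at least $|G|$, meaning at least $|G|$ active voters approve $\bw_r$. None of these voters lies in $G$, because by hypothesis no voter in $G$ approves $\bw_r$. All of these voters are deactivated after round $r$. Thus the sets of voters who approve the winner and are still active are pairwise disjoint across the rounds in $R$. Each such set has size at least $|G|$ and is disjoint from $G$.

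This yields $n \geq |G| + \ell \cdot |G|$. Since $|G| \geq 1$, this gives $n > \ell \cdot |G| \geq n$, a contradiction, so $\sat(G,\vr(A^T)) \geq 1$ and JR holds. The step needing the most care is the disjointness and exclusion-from-$G$ argument. There I must use both the zeroing rule $f(\alpha)=0$ and the clipping $\max(0,\alpha_i)$ in the score, so that deactivated voters contribute nothing. I do not expect tie-breaking or the online nature of the rule to cause difficulty.
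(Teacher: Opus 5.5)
Your proof is correct and is the same counting argument as the paper's. In each of the $\ell$ cohesive rounds the winner has score at least $|G|$. The voters approving it are then deactivated, so these sets are pairwise disjoint, giving at least $\ell|G| \geq n$ satisfied voters and a contradiction. Noting explicitly that these voters lie outside $G$, which yields $n \geq (\ell+1)|G|$, makes your final contradiction a little cleaner than the paper's.
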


\begin{proof}
Suppose towards contradiction that GreedyJR has terminated and there exists an $\ell$-cohesive group $G$ such that $\flr{\ell \cdot \frac{|G|}{n}} \geq 1$ and $\sat(G,\vr(A^T)) = 0$. Let $g = |G|$. $\flr{\ell\cdot \frac{g}{n}} \geq 1$ implies that $\ell \geq \frac{n}{g}$. In each of the $\ell$ rounds in which $G$ is cohesive, GreedyJR selected an alternative with score at least $g$, since it always picks an alternative with max score, which is at least $g$ in these rounds. In each round, only voters not yet satisfied are considered in computing the score of each alternative; this ensures that the $g$ satisfied voters in each of the $\ell$ rounds are distinct from all previously satisfied voters. This means that at least $\ell\cdot g = \frac{n}{g}\cdot g = n$ voters are satisfied over the $\ell$ rounds in which $G$ is cohesive, but $G$ has satisfaction 0, a contradiction. 
\end{proof}

Because GreedyJR is a WAM, it follow immediately that it is OSP.

\begin{proposition}
    GreedyJR is not SP.
\end{proposition}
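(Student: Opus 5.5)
I will prove the statement by exhibiting a small explicit instance in which one voter gains total satisfaction by misreporting. The misreport will be \emph{free-riding} in an early round, in the sense of Lackner et al.\ \cite{free-riding}. GreedyJR permanently sets to zero the weight of every voter who approves a winner. A voter whose favourite alternative wins anyway can therefore hide that approval, keep weight $1$, and use it to tip a later round. I use lexicographic tie-breaking, as in the proof that WAMs are OSP.

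The construction uses $n=3$ voters $v_1,v_2,v_3$ and $T=2$ rounds. In round 1 let $C_1=\{a,b\}$ with truthful approvals $A_1(v_1)=\{a\}$, $A_1(v_2)=\{a\}$ and $A_1(v_3)=\{b\}$. In round 2 let $C_2=\{c,d\}$ with $A_2(v_1)=\{c\}$, $A_2(v_2)=\{d\}$ and $A_2(v_3)=\{d\}$.

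Under truthful reporting, $a$ has score $2$ in round 1 and wins, so $v_1$ and $v_2$ drop to weight $0$. In round 2, $c$ has score $0$ and $d$ has score $1$ (from $v_3$), so $d$ wins. Thus $\sat(v_1,\vr(A^2))=1$.

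Now let $v_1$ report $\tilde A_1(v_1)=\{b\}$ and keep $\tilde A_2(v_1)=\{c\}$. In round 1 both $a$ and $b$ have score $1$, so $a$ wins by the tie-break. Voter $v_1$ is truly satisfied but keeps weight $1$, while $v_2$ drops to $0$. If the tie-breaking ever went the other way, I would make round 1 robust by adding a fourth voter who approves only $a$, so that $a$ wins strictly under both reports; then $n=4$ and the round-2 scores need a recheck.

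In round 2, $c$ has score $1$ (from $v_1$) and $d$ has score $1$ (from $v_3$). Lexicographically $c$ precedes $d$, so $c$ wins. Hence $\sat(v_1,\vr(A_{-1}^2,\tilde A^2(v_1)))=2>1$, which violates SP.

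The main obstacle is arranging ties and the tie-breaking convention so the misreport strictly helps. If ties are delicate, I will add a fifth voter approving $c$ only in round 2 and $a$ in round 1, so that every comparison is strict. The final step is to verify both runs explicitly, applying the rule as defined: $f(\alpha)=0$ for voters approving the winner and $g(\alpha)=\alpha$ otherwise. This example also matches the paper's framing that OSP, which GreedyJR has as a WAM, does not imply SP.
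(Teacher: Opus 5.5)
Your manipulated run is miscomputed, and this breaks the example. Under the report $\tilde A_1(v_1)=\{b\}$, alternative $b$ is declared by both $v_1$ and $v_3$. So $\score_1(b)=2>1=\score_1(a)$, and $b$ wins round 1 outright; there is no tie. Then $v_1$ gets nothing in round 1, since $b\notin A_1(v_1)$. Its weight is reset to $0$ anyway because it declared the winner. In round 2, $d$ still has score $1$ (now from $v_2$) against $0$ for $c$, so $v_1$ ends with satisfaction $0<1$.

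In fact your instance admits no profitable deviation for $v_1$ at all. Since $C_1=\{a,b\}$, each of the three possible nonempty declarations $\{a\},\{b\},\{a,b\}$ contains the resulting round-1 winner. So $v_1$ always enters round 2 with weight $0$, while exactly one of $v_2,v_3$ keeps weight $1$, and both approve $d$.

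What is missing is a decoy. The free-rider must declare an alternative that hides its approval of the round-1 winner without handing that approval to a competitor. This is why the paper's instance has a fourth alternative $c_4$ that nobody truthfully approves: voter $v_3$ declares $\{c_4\}$ in round 1, and $c_1$ still wins strictly.

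Your fallbacks do not fix this as stated. With a fourth voter approving only $a$, the manipulated round 1 is a $2$--$2$ tie between $a$ (from $v_2,v_4$) and $b$ (from $v_1,v_3$), not a strict win. It does happen to work under lexicographic tie-breaking: $a$ wins round 1, then $c$ wins the $1$--$1$ tie with $d$ in round 2. But you would have to say so and verify it.

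The proposed fifth voter approves $a$ in round 1 and is therefore zeroed, so it contributes nothing to $c$ in round 2. More fundamentally, in a construction of your shape (same round-1 winner in both runs, gain realized in round 2), the round-2 scores of the two runs differ only by $v_1$'s single unit of weight. So one of the two round-2 comparisons must be a tie. You cannot make every comparison strict, and you must commit to how that tie is broken. The paper does this with its wlog assumption that $c_3$ wins the truthful round-2 tie; under lexicographic ties, your $c$-versus-$d$ tie already goes the right way.

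A minimal repair:
\begin{itemize}
\item Take $C_1=\{a,b,e\}$.
\item Add $v_4$ with $A_1(v_4)=\{a\}$ and $A_2(v_4)=\{d\}$.
\item Let $v_1$ declare $\{e\}$ in round 1.
\end{itemize}
Then $a$ wins round 1 strictly in both runs, with scores $3,1,0$ and $2,1,1$ for $a,b,e$. Truthfully, $d$ wins round 2 ($1$ vs.\ $0$). Under the manipulation, $c$ wins the $1$--$1$ tie with $d$. This gives $v_1$ satisfaction $2>1$.
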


\begin{proof}
    Let the following table represent the truthful approvals of 5 voters and outcomes of GreedyJR in an instance with $T=2$ and $C_1=C_2=\{c_1,\dots,c_4\}$: 
\begin{table}[H]
  \centering
  \begin{tabular}{lcccccc}
    \toprule
    Round & $v_1$ & $v_2$ & $v_3$ & $v_4$ & $v_5$ & winner \\
    \midrule
    1 & $c_1$ & $c_1$ & $c_1$ & $c_2$ & $c_3$ & $c_1$  \\
    2 & $c_1$ & $c_1$ & $c_2$ & $c_2$ & $c_3$ & $c_3$  \\
    \bottomrule
  \end{tabular}
\end{table}
In round 1, $c_1$ has max score, and is selected by GreedyJR. In round 2, voters 1 to 3 are not considered in the computation of scores. We assume wlog that $c_3$ wins a tie-breaking and so $\sat(v_3,A^T) = 1$. If voter 3 instead falsely declares $\{ c_4 \}$ in round 1, then the outcome remains $c_1$, and in round 2, voter 3's approvals contribute to the scores, and GreedyJR selects $c_2$ leading to $\sat(v_3, \vr(\atti, \tstrat)) = 2 > 1 = \sat(v_3, \vr(A^T))$.
\end{proof}

\subsection{Method of Equal Shares (MES)}

\begin{algorithm}[H]
\caption{Method of Equal Shares}
       $b_i \gets 1$ for all $i \in [n]$

       \For{$t = 1$ to $T$}{
        $C^*_t \gets \underset{c \in C_t}{\argmin} \{ \underset{\rho}{\argmin} \sum\limits_{i \in [n], c \in A_t(v_i)} \min(b_i, \rho) \geq n/T \}$
        
        \If{$C^*_t = \emptyset$}{terminate}
        
        \Else{
         $\bw_t \gets c$ for lexicographically first $c \in C^*_t$ 
         
         $b_i \gets b_i - \min(b_i, \rho_c)$ \tcp*{where $\rho_c$ is minimal $\rho$ for $c$}
         }
         
        }
        
\end{algorithm}

MES is a semi-online voting rule, which means that it requires knowledge of $T$, the total number of rounds there will be. It initializes a fixed price $p=n/T$ and each voter's budget $b_i$ to 1. In each voting round, it selects an alternative $c$ that minimizes $\rho$ in the expression $\sum_{i\in [n],c \in A_t(v_i)} \min(b_i, \rho) \geq p$. Each voter $v_i$ approving of $c$ then has $\min(b_i,\rho_c)$ deducted from their budget, where $\rho_c$ is the minimal $\rho$ required to pay for $c$. An important property of MES is that it is non-exhaustive, meaning that there are input instances that terminate at some iteration $t \leq T$ where there are voters with non-zero budget that approve of available alternatives but cannot afford to pay for them. 

A simple illustrative example is $N=\{v_1,v_2\}$, $T=2$, $p=1$, $C_1=C_2=\{ c_1,c_2 \}$ where both voters approve of $c_1$ in round 1 and are charged $p/2 = 1/2$, and then approve of different singletons $c_1$ and $c_2$ in round 2. With a remaining budget of $1/2$ each, neither of them has enough to pay the price of 1 for the singleton they want, and so MES terminates in round 2 without selecting a winning alternative. 

Different procedures are proposed for dealing with premature termination, for example running a secondary voting rule on the remaining rounds, or simply doing nothing and terminating. We assume termination with no completion procedure for our results. Whether or not there exists a completion procedure that preserves OSP for MES is an open question. 

\begin{proposition}
    MES with no completion procedure satisfies OSP.
\end{proposition}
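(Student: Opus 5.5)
The plan is to follow the same route as for WAMs: show that MES with no completion procedure satisfies OIIA, and then invoke the theorem that OIIA implies OSP. One subtlety has to be settled first. MES may terminate, so round $t$ may have no winner. I would treat a round with no selected alternative as giving satisfaction $0$ to every voter. I would then check OSP directly in that case, in addition to using OIIA for rounds that do have a winner.

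Fix an instance, a round $t$ and a voter $v_i$. Because MES is online in the budgets, the budgets $b_j$ at the start of round $t$ depend only on rounds $1,\dots,t-1$. These rounds are identical in $A^t$ and in $(\ati,\onstrat)$, and the price $p=n/T$ is fixed. Hence in round $t$ the rule is a fixed function of the current approvals. For each alternative $c$, let $\rho_c$ be the minimal $\rho$ with $\sum_{j: c\in A_t(v_j)}\min(b_j,\rho)\ge p$, or $\infty$ if no such $\rho$ exists. The key observation is that $\rho_c$ depends only on the set of supporters of $c$.

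For OIIA, suppose $\vr(A^t)_t=a$ and $v_i$ removes some $d\neq a$ from its approval set. Only the supporter set of $d$ changes, and it shrinks. The function $\rho\mapsto\sum\min(b_j,\rho)$ is pointwise weakly decreasing under removal of a supporter, so $\rho_d$ weakly increases. All other $\rho_c$, including $\rho_a$, are unchanged. Therefore $a$ still attains the minimum. Under lexicographic tie-breaking, any alternative that now ties with $a$ also tied with $a$ before, and it lost the tie-break then, so it loses again. No alternative becomes affordable that was not affordable before. So the outcome is still $a$ and OIIA holds in every round where a winner exists. By the theorem that OIIA implies OSP, a voter whose round-$t$ winner exists cannot gain.

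The remaining case, which I expect to be the main obstacle, is a round in which MES terminates under truthful approvals, so $C^*_t=\emptyset$ and every $\rho_c=\infty$. Here I would argue that any misreport yielding satisfaction $1$ must select some $c\in A_t(v_i)$. But $c$'s affordability depends only on the budgets of its supporters. Misreporting can only add $v_i$ as a supporter of alternatives it does not approve, or remove it as a supporter of alternatives it does approve. In both cases the supporter set of any $c\in A_t(v_i)$ weakly shrinks, so $c$ remains unaffordable and $v_i$ gets satisfaction $0$ either way. The same monotonicity argument also covers the converse situation, where a misreport could cause termination, since that only lowers satisfaction. Finally, I would remark that if termination is considered to end all later rounds, this does not matter, because OSP compares only round-$t$ satisfaction.
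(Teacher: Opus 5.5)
Your proposal is correct and follows the paper's route. Like the paper, you prove OIIA by observing that removing $d \neq a$ from $v_i$'s ballot leaves every $\rho_c$ with $c \neq d$ unchanged (budgets are fixed by earlier rounds) and can only raise $\rho_d$, so $a$ still wins under lexicographic tie-breaking, and then you invoke OIIA $\Rightarrow$ OSP. Your separate treatment of rounds where MES terminates is a worthwhile addition: a misreport can only shrink the supporter set of any $c \in A_t(v_i)$, so no such $c$ becomes affordable. The paper's proof tacitly assumes round $t$ has a winner, since both the OIIA definition and the OIIA $\Rightarrow$ OSP argument presuppose one.
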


\begin{proof}
We prove this by showing that MES satisfies OIIA. We assume lexicographical tie-breaking. Let $\vr$ be MES, and fix any instance $\instance$, $t \leq T$ and any voter $v_i \in N$. In round $t$, $\vr(A^t)_t = a$. Let $\rho(c, A^t)$ be the minimal $\rho$ required to pay for $c$ in round $t$. Fixing all other voters, suppose that $v_i$ declares $\onstrat = \onstratdef{A_t(v_i) \setminus \{ d \}}$ for any $d \in A_t(v_i)$, $d \neq a$. Let $\tilde{A}^t = (\ati, \onstrat)$. For any $c \neq d$ we have that 

\[
\underset{\rho}{\argmin} \{ \sum\limits_{j \in [n], c \in \tilde{A}_t(v_j)} \min(b_j, \rho) \geq p \} = \underset{\rho}{\argmin} \{ \sum\limits_{j \in [n], c \in A_t(v_j)} \min(b_j, \rho) \geq p \}
\]

as the approvals for $c$ and budgets are identical in $A^t$ and $\tilde{A}^t$. For $d$ we have 

\[
\underset{\rho}{\argmin} \{ \sum\limits_{j \in [n], d \in \tilde{A}_t(v_j)} \min(b_j, \rho) \geq p \} \geq \underset{\rho}{\argmin} \{ \sum\limits_{j \in [n], d \in A_t(v_j)} \min(b_j, \rho) \geq p \}
\]

as $v_i$'s budget is either 0 or greater than 0. In the former case, $\rho(d, A^t) = \rho(d, \tilde{A}^t)$. In the latter case $\rho(d, A^t) < \rho(d, \tilde{A}^t)$; this is because a constant price $p$ is distributed over strictly less budget, as the total available budget to pay for it is exactly the same budget that was available under $A^t$ minus voter $v_i$'s budget, which is positive. Therefore the only minimal $\rho$ that might change is that of $d$, and it either (i) remains the same, in which case it is either not minimal or loses a tie-breaking, as was the case in $\vr(A^t)$ or (ii) it increases in which case it is not minimal. It must then be that $\vr(\tilde{A}^t)_t = \vr(A^t)_t$ and MES satisfies OIIA.
\end{proof}

When voters are restricted to, or expected to play only strategies that maximize satisfaction in the current round, MES is resilient to strategic manipulation. However, when player strategies are unrestricted, MES is susceptible to the free-rider problem.

\begin{proposition}
    MES with no completion procedure does not satisfy SP.
\end{proposition}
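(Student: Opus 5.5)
The plan is to exhibit an explicit free-riding counterexample, exactly as was done for GreedyJR. The intended manipulation is that a voter withholds approval of an alternative that wins anyway. That way the voter does not pay for it and keeps budget for a later round in which they would otherwise be unable to afford their preferred alternative. I would use the smallest possible instance. It is essentially the illustrative premature-termination example given above, with one voter deviating.

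The instance has $N=\{v_1,v_2\}$, $T=2$ (so $p=n/T=1$), $C_1=C_2=\{c_1,c_2,c_3\}$ and lexicographic tie-breaking. The truthful approvals are $A_1(v_1)=A_1(v_2)=\{c_1\}$, $A_2(v_1)=\{c_2\}$ and $A_2(v_2)=\{c_3\}$.

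\emph{Truthful run.} In round 1, $c_1$ is bought with $\rho=1/2$, so both budgets drop to $1/2$. In round 2, $c_2$ and $c_3$ each have a single supporter holding budget $1/2<p$, and $c_1$ has no supporters. Hence $C^*_2=\emptyset$ and MES terminates with no completion procedure. Therefore $\sat(v_1,\vr(A^T))=1$.

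\emph{Manipulated run.} Voter $v_1$ instead declares $\tilde A_1(v_1)=\{c_2\}$ in round 1 and is truthful in round 2. In round 1, $c_1$ is supported only by $v_2$, with $\rho=1$, and $c_2$ only by $v_1$, also with $\rho=1$. The tie is broken lexicographically in favour of $c_1$, so the outcome is unchanged. This is consistent with OSP, since $v_1$ gains nothing in round 1. The key difference is that $v_2$ pays the full price, leaving $b_2=0$ and $b_1=1$. In round 2, $c_2$ is now affordable by $v_1$ with $\rho=1$, while $c_3$ is unaffordable. So $c_2$ wins. Evaluated with respect to $v_1$'s truthful approvals, this gives $\sat(v_1,\vr(A^T_{-1},\tilde A^T(v_1)))=2>1$, violating SP.

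The only delicate points are the tie in the manipulated round 1 and the complete-approvals assumption. For the tie, I must make sure the deviation does not itself change the round-1 winner. That follows from lexicographic tie-breaking, or it can be made robust by declaring an alternative that is strictly more expensive. For the assumption, every declared set is nonempty, so the instance is admissible. Everything else is direct computation of the $\rho$ values, which I would present in a small table mirroring the GreedyJR proof.
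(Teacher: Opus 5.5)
Your counterexample is correct and uses the same free-riding idea as the paper, but on a different instance. The paper uses $n=3$, $T=4$, $p=3/4$:
\begin{itemize}
\item all three voters buy $c_1$ in round~1;
\item $v_3$ then declares $c_2$ in rounds~2 and~3, while $v_1,v_2$ keep buying $c_1$ at $\rho=3/8$ against the decoy's $\rho=3/4$;
\item so $v_3$ keeps $b_3=3/4$, buys $c_3$ alone in round~4, and gets $4>3$.
\end{itemize}
Your instance is smaller and easier to check by hand. (The paper's table lists $b_1=b_2=3/16$ after round~3 where the correct value is $0$, a slip that does not change the conclusion.) However, your example works only because the tie at $\rho=1$ between $c_1$ and your decoy in round~1 is broken in favour of $c_1$. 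The paper's instance has no ties in either run, so it is a counterexample to MES under any deterministic tie-breaking.

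Relying on the lexicographic rule is legitimate, since the paper's MES specifies it. But your aside that the tie could be avoided by declaring a strictly more expensive alternative cannot be realized in your instance. With $p=1$ and $b_1=1$, every alternative that $v_1$ supports alone has $\rho=1$ exactly, which ties $c_1$. Any declaration containing $c_1$ makes $v_1$ pay $1/2$ again, so there is no gain. To remove the dependence on tie-breaking you need, as in the paper, several other voters sharing the cost of the common alternative, so that its $\rho$ falls strictly below the decoy's.
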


\begin{proof}
Let $N = \{v_1, v_2, v_3\}$, $T=4$ and $C_t = \{ c_1, c_2, c_3 \}$ for all $t \leq T$, and so $p = n/T = 3/4$. The following table shows the truthful approvals, winning alternative and budget of each voter at the end of each round.

\begin{table}[H]
  \centering
  \begin{tabular}{lccccccc}
    \toprule
    Round & $v_1$ & $v_2$ & $v_3$ & winner & $b_1$ & $b_2$ & $b_3$ \\
    \midrule
    1 & $c_1$ & $c_1$ & $c_1$ & $c_1$ & 3/4 & 3/4 & 3/4 \\
    2 & $c_1$ & $c_1$ & $c_1$ & $c_1$ & 2/4 & 2/4 & 2/4 \\
    3 & $c_1$ & $c_1$ & $c_1$ & $c_1$ & 1/4 & 1/4 & 1/4 \\
    4 & $c_1$ & $c_2$ & $c_3$ & $\emptyset$ & 1/4 & 1/4 & 1/4 \\
    \bottomrule
  \end{tabular}
\end{table}

MES terminates in round 4 without choosing a winner, as no minimal $\rho$ exists such that an alternative can be paid for. Consider now the following table where $v_3$ plays a strategy: 

\begin{table}[H]
  \centering
  \begin{tabular}{lccccccc}
    \toprule
    Round & $v_1$ & $v_2$ & $v_3$ & winner & $b_1$ & $b_2$ & $b_3$ \\
    \midrule
    1 & $c_1$ & $c_1$ & $c_1$ & $c_1$ & 3/4 & 3/4 & 3/4 \\
    2 & $c_1$ & $c_1$ & $c_2$ & $c_1$ & 3/8 & 3/8 & 3/4 \\
    3 & $c_1$ & $c_1$ & $c_2$ & $c_1$ & 3/16 & 3/16 & 3/4 \\
    4 & $c_1$ & $c_2$ & $c_3$ & $c_3$ & 3/16 & 3/16 & 0 \\
    \bottomrule
  \end{tabular}
\end{table}

By playing this strategy, $v_3$ gets satisfaction 4, which is strictly greater than 3 when declaring truthfully, therefore MES without a completion procedure is not SP.
\end{proof}

\subsection{Perpetual Phragmén}
\begin{algorithm}[H]
\caption{\textsc{Perpetual Phragmén}}
$\lambda_i \gets 0$ for all $i \in [n]$ \\
    
\For{$t = 1$ to $T$}{
    
$\mathscr{S} = \{ S : S \subseteq N,\ \acap_t(S) \neq \emptyset \}$

$S^* \gets \argmin_{S \in \mathscr{S}} \{ \dfrac{1 + \sum_{i \in [n], v_i \in S}\lambda_i}{|S|} \}$

$\bw_t \gets c \in \acap_t(S^*)$ \tcp*{chosen arbitrarily}

\For{$i \in [n],\ v_i \in S^*$}{
$\lambda_i \gets \dfrac{1 + \sum_{i \in [n], v_i \in S^*}\lambda_i}{|S^*|}$ 
}}
\end{algorithm}

Perpetual Phragmén (PP) is fully online and can be computed in polynomial time \cite{chandak}. It is described as a load-balancing algorithm. Each voter begins with load 0, and in each round a load of 1 is distributed to some group. To select an alternative, the algorithm considers all groups cohesive in that round, that is groups with non-empty approval intersections. The potential load of a group in some round is the sum of the voters' loads plus 1, the additional load for that round. PP selects a cohesive group with minimal average potential load, and selects an alternative from the intersection of that group's approvals. It then sets the load of each voter in the winning group to an even share of the group's potential load, thus distributing the additional load of 1 for that round. Note that a voter's load can never decrease, not all voters in a winning group receive the same additional load, and not all voters who approve of the winning alternative incur additional load.

\begin{proposition}
    Perpetual Phragmen satisfies OIIA
\end{proposition}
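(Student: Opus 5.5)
The plan is to verify OIIA directly from the definition of Perpetual Phragmén, as was done for WAMs and MES. The key observation is that deleting a non-winning alternative can only shrink the family of candidate groups considered in round $t$, and it never touches the group or alternative that actually won. Fix an instance $\instance$, a round $t \leq T$, a voter $v_i$, and suppose $\vr(A^t)_t = c$. Let $d \in A_t(v_i)$ with $d \neq c$, and set $\tilde{A}^t = (\ati, \onstrat)$ with $\onstrat = \onstratdef{A_t(v_i) \setminus \{ d \}}$. Since rounds $1,\dots,t-1$ are identical in $A^t$ and $\tilde{A}^t$, the loads $\lambda_j$ at the start of round $t$ are the same in both runs. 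The objective $\frac{1 + \sum_{v_j \in S}\lambda_j}{|S|}$ therefore depends only on $S$, not on the approvals in round $t$.

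First I compare the families $\mathscr{S}$ and $\tilde{\mathscr{S}}$ of cohesive groups in round $t$. For $S \not\ni v_i$ the intersection $\acap_t(S)$ is unchanged. For $S \ni v_i$ the new intersection is $\acap_t(S) \setminus \{d\}$. Hence $\tilde{\mathscr{S}} \subseteq \mathscr{S}$, and the only groups lost are those containing $v_i$ with $\acap_t(S) = \{d\}$. For every $S \in \tilde{\mathscr{S}}$ the set of alternatives it supports is $\acap_t(S)$ or $\acap_t(S)\setminus\{d\}$, so it is contained in the old one.

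Next, let $S^*$ be the minimizing group that produced $c$, so $c \in \acap_t(S^*)$. If $v_i \notin S^*$, then $S^*$ is untouched. If $v_i \in S^*$, then $c \in A_t(v_i)$ and $c \neq d$, so $c \in \acap_t(S^*) \setminus \{d\}$. Either way $S^* \in \tilde{\mathscr{S}}$ and still supports $c$. Because the family only shrank and objective values are unchanged, the minimum value is the same. The set of minimizing groups under $\tilde{A}^t$ is a subset of the old one and still contains $S^*$.

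The main obstacle is that the algorithm chooses $S^*$ and $c$ ``arbitrarily'', and OIIA is false under an adversarial tie-break. As in the earlier proofs, I will fix a deterministic tie-breaking rule that depends only on the current round's candidates. Concretely, among all alternatives in $\bigcup_{S \text{ minimizing}} \acap_t(S)$, choose the lexicographically first one (the group used for the load update can be any minimizing group whose intersection contains it). Under $\tilde{A}^t$ this union of winning alternatives is a subset of the old union, since both the minimizing groups and their intersections shrink. By the previous paragraph it still contains $c$. The lexicographically first element of the smaller set is therefore still $c$, giving $\vr(\tilde{A}^t)_t = c$, which is OIIA. By the theorem that OIIA implies OSP, Perpetual Phragmén is then OSP.
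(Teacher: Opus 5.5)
Your proof is correct and follows the same route as the paper: loads entering round $t$ are unchanged, deleting $d$ can only shrink the family of cohesive groups, and the winning group survives with $c$ still in its intersection, so the minimum value is preserved and the tie-break still returns $c$. The only difference is the tie-breaking convention. The paper takes the lexicographically first minimizing group and then the lexicographically first alternative in its intersection, whereas you take the lexicographically first alternative in the union of all minimizing groups' intersections; both work for the same reason, and you make explicit that the minimum is preserved because $S^*$ survives, which the paper uses implicitly when it claims the argmin sets are nested.
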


\begin{proof}
Fix any $\instance$, $v_i \in N$, $t\leq T$ and let $\vr$ be Perpetual Phragmen and assume lexicographical tie-breaking for all tie-breakings. Let $\strat = \onstratdef{A_t(v_i) \setminus \{ d \}}$ for some $d \neq c$, and $\tilde{A}^t = (\ati, \strat)$. We have that $\vr(A^t) = c$ for some $c \in C_t$. Let $\phi(G) = \frac{1 + \sum_{i \in [n], v_i \in G} \lambda_i}{|G|}$ denote the potential average load of a group, that is the load assigned to each voter in $G$ should they win. 

Let $\ncap_t = \{ G \subseteq N : \acap_t(G) \neq \emptyset \}$ be the set of all groups of voters with nonempty approval intersections in round $t$ of approval profile $A^t$, and $\nncap_t = \{ G \subseteq N : \tilde{A}^{\cap}_t(G) \neq \emptyset \}$ be the set of groups of voters with nonempty approval intersections in round $t$ of approval profile $\tilde{A}^t$ . Clearly, $\nncap_t \subseteq \ncap_t$, as removing an alternative from $v_i$'s approval set can only induce fewer groups with nonempty intersection. We note that $\phi(G)$ is the same in $A^t$ and $\tilde{A}^t$, that is, the individual loads are identical in both sets and therefore so too are their sums. It follows that $\argmin_{G \in \nncap_t} \phi(G) \subseteq \argmin_{G \in \ncap_t}\phi(G)$. Let $G^* = \lexmin(\ncap_t)$ be the lexicographically first group in $\ncap_t$. Either $v_i \in G^*$ or $v_i \not\in G^*$. In the former case, $v_i$ is still in $G^*$ after removing approval for $d$, as the winning alternative $c$ was chosen from $\acap_t(G^*)$ and $d \neq c$. In the latter case, clearly $G^*$ is unaffected and remains the lexfirst set of the argmin set. In both cases, we conclude that $G^* = \lexmin(\ncap_t) = \lexmin(\nncap_t)$, and that $\vr(A^t) = \lexmin(\acap_t(G^*)) = \vr(\tilde{A}^t) = c$, and it follows that Perpetual Phragmen satisfies OIIA.
\end{proof}

It follows immediately that Perpetual Phragmen satisfies OSP.

\begin{proposition}
    PP does not satisfy SP.
\end{proposition}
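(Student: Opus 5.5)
To show that PP is not SP, I will exhibit a free-riding counterexample in the style of the MES and GreedyJR proofs. This is the behaviour identified by Lackner et al.\ \cite{free-riding}. In an early round, a voter $v_3$ hides its approval of an alternative that wins anyway. Because PP charges load only to the winning group $S^*$, $v_3$ then keeps load $0$. In a later round this low load lets $v_3$'s group have the minimal average potential load $\phi$, so its preferred alternative wins there.

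I will use $N=\{v_1,\dots,v_5\}$, $T=2$ and $C_1=C_2=\{c_1,c_2,c_3\}$, with these truthful approvals:
\begin{table}[H]
  \centering
  \begin{tabular}{lcccccc}
    \toprule
    Round & $v_1$ & $v_2$ & $v_3$ & $v_4$ & $v_5$ & winner \\
    \midrule
    1 & $c_1$ & $c_1$ & $c_1$ & $c_2$ & $c_3$ & $c_1$ \\
    2 & $c_1$ & $c_1$ & $c_2$ & $c_2$ & $c_1$ & $c_1$ \\
    \bottomrule
  \end{tabular}
\end{table}
\noindent The steps, in order, are as follows.

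First, compute the truthful run. In round 1 the cohesive groups are subsets of $\{v_1,v_2,v_3\}$ (value at best $1/3$), and $\{v_4\}$, $\{v_5\}$ (value $1$ each). So $S^*=\{v_1,v_2,v_3\}$ and $c_1$ wins, giving loads $(1/3,1/3,1/3,0,0)$. In round 2 the candidate groups are $\{v_1,v_2,v_5\}$ with $\phi=5/9$, $\{v_3,v_4\}$ with $\phi=2/3$, and $\{v_1,v_2\}$ with $\phi=5/6$. Every other subset, including the singletons, has a larger value. Hence $c_1$ wins, and $\sat(v_3,\vr(A^T))=1$.

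Second, compute the run in which $v_3$ declares $\{c_3\}$ in round 1 and reports truthfully in round 2. In round 1 the options are $\{v_1,v_2\}$ with $\phi=1/2$, and $\{v_3,v_5\}$ (both now approve $c_3$) with $\phi=1/2$. This is a tie, which I must resolve. The simplest fix is to have $v_3$ declare a fresh alternative. I will therefore add $c_4$ to $C_1$ and let $v_3$ declare $\{c_4\}$. Then the competitors are only singletons with $\phi=1$, so $\{v_1,v_2\}$ wins uniquely with $c_1$, and the loads become $(1/2,1/2,0,0,0)$. In round 2, $\{v_3,v_4\}$ has $\phi=1/2$, while $\{v_1,v_2,v_5\}$ has $\phi=2/3$ and $\{v_1,v_2\}$ has $\phi=1$. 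All singletons have $\phi\ge 1$. So $c_2$ wins.

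Since $v_3$ truly approves $c_1$ in round 1, the manipulated outcome gives $\sat(v_3,\vr(\atti,\tstrat))=2>1$, so PP is not SP.

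The main obstacle is bookkeeping rather than ideas: I must check every subset in each round, not only the obvious ones, and make all minima strict. Strict minima keep the argument independent of PP's arbitrary tie-breaking and of the choice $c\in\acap_t(S^*)$. The round-1 tie noted above is exactly the kind of thing to catch and fix, here by giving the manipulator a fresh alternative.
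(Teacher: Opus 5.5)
Your counterexample is correct, and it uses the same free-riding idea as the paper: the manipulator withholds approval of the round-1 winner so it is not charged load, and its zero load then lets its group win round 2. The paper uses a smaller three-voter instance whose truthful round 2 is a symmetric tie resolved by relabeling, whereas your five-voter instance with the fresh alternative $c_4$ has strict minima throughout and so does not depend on tie-breaking.
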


\begin{proof}
    Let $N=\{v_1, v_2, v_3 \}$, $C_1=C_2 = \{ a,b,c \}$, $T=2$. Suppose all voters approve of $a$ in round 1, and approve of different singletons in round 2. After round 1, all voters are assigned a load of 1/3. In round 2, only one voter is satsifed via some tie-breaking. Wlog assume that $v_3$ is not satisfied in round 2. Then $v_3$ would gain strictly more satisfaction by declaring $\{b\}$ in round 1, as PP would still select $a$ and assign a load of 1/2 each to $v_1$ and $v_2$, and thus $v_3$ would have minimal load in round 2.
\end{proof}

\section{Price of manipulability} \label{sec:pom}

If a voting rule satisfies some proportionality axiom but is manipulable, it may only guarantee proportional representation when all voters declare truthfully. By design, justified representation variants do not guarantee minimal representation to all voters in an $\ell$-cohesive group, as such a requirement would be too difficult to satisfy. Even EJR, which does guarantee that at least one voter in the group a significant amount of satisfaction, can leave some voter in a group with as little as 1 satisfaction, even for very large $\ell$. Since the voting rule determines $\ell$-cohesiveness from declared approval sets, and representation is only guaranteed for $\ell$-cohesive groups, strategic voting may break up groups that would be $\ell$-cohesive under truthful declarations, and the voting rule may then fail to preserve the satisfaction lower bound for the truly $\ell$-cohesive groups.  Even very modest opportunities for strategic voting may be enough to incentivize some voters to declare in such a way as to break away from an $\ell$-cohesive group to which they belong and have a significant impact on the satisfaction of the group. We begin by looking at the effect that a single strategic voter can have on the satisfaction lower bounds of $\ell$-cohesive groups. If a voting rule satisfies JR, it turns out that a single strategic voter cannot cause a violation of JR.
\begin{proposition}
Any voting rule $\vr$ that satisfies JR on truthful inputs also satisfies JR if a single voter acts strategically. 
\end{proposition}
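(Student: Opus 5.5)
The plan is to compare the truthful profile $A^T$ with the manipulated profile $\tilde{A}^T=(\atti,\tstrat)$, where $v_i$ is the single strategic voter. We always measure satisfaction with respect to the \emph{true} approvals $A^T$, while the rule only sees $\tilde{A}^T$. Throughout, I take ``acts strategically'' to mean that $v_i$'s declaration is a best response: $\sat(v_i,\vr(\tilde{A}^T))\ge \sat(v_i,\vr(A^T))$. Otherwise $v_i$ could simply hurt itself, and the statement fails trivially for $G=\{v_i\}$. Fix a group $G$ that is $\ell$-cohesive in $A^T$ with $\flr{\ell\cdot|G|/n}\ge 1$. The goal is $\sat(G,\vr(\tilde{A}^T))\ge 1$. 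I split into cases according to whether $v_i\in G$.

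\emph{Case $v_i\notin G$.} Every member of $G$ declares truthfully, so $\acap_r(G)$ is the same in $\tilde{A}^T$ as in $A^T$, and $G$ is $\ell$-cohesive in $\tilde{A}^T$. Since $\vr$ satisfies JR on every input, in particular on $\tilde{A}^T$, some round's outcome lies in $\tilde{A}^{\cup}_t(G)=\acup_t(G)$. Satisfaction with respect to declared and true approvals coincides for $G$, so we are done.

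\emph{Case $v_i\in G$.} If $v_i$ approves (truthfully) some outcome of $\vr(\tilde{A}^T)$, then $G$ is satisfied, so assume $\sat(v_i,\vr(\tilde{A}^T))=0$. By the best-response assumption, $\sat(v_i,\vr(A^T))=0$ as well. Now let $G'=G\setminus\{v_i\}$. Its members are truthful, and $\acap_r(G')\supseteq\acap_r(G)\neq\emptyset$, so $G'$ is $\ell$-cohesive in $\tilde{A}^T$. Applying JR to $\tilde{A}^T$ therefore gives $\sat(G',\vr(\tilde{A}^T))\ge 1$ whenever $\flr{\ell(|G|-1)/n}\ge 1$, and this finishes the case when the floor does not drop.

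\emph{Boundary case (main obstacle).} It remains to handle $\ell(|G|-1)/n<1\le \ell|G|/n$, where removing $v_i$ costs $G'$ its JR entitlement. For $|G|=1$ this is exactly the best-response argument above: truthful JR gives $v_i$ satisfaction at least $1$, hence at least $1$ under the best response. For $|G|\ge 2$, I would first try to build an auxiliary group in $\tilde{A}^T$ that is large enough, for instance $G'$ together with a voter $v_j\notin G$ whose declared approvals meet $\acap_r(G')$ in the same $\ell$ rounds. Alternatively, I would use the fact that $v_i$ is unsatisfied in both runs to argue, via a GreedyJR-style counting over the $\ell$ cohesive rounds, that the outcomes in those rounds must already cover a member of $G'$.

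I expect this step to be the crux, and I am not certain it goes through for an arbitrary JR rule. If no general argument emerges, I would look for a counterexample in this regime. That would show the proposition needs either the best-response assumption to be strengthened, or the guarantee to be stated with $|G|-1$ in place of $|G|$.
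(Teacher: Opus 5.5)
There is a genuine gap, and it comes from the hypothesis you chose rather than from a missing combinatorial step. You read ``acts strategically'' as the weak condition $\sat(v_i,\vr(\tilde{A}^T))\ge \sat(v_i,\vr(A^T))$. The paper instead takes a strategic voter to be one whose deviation $\tstrat$ \emph{strictly} improves its true satisfaction, $\sat(v_i,\vr(\atti,\tstrat))>\sat(v_i,\vr(A^T))$. This also rules out self-harm, and the remark after the proposition says it explicitly: a voter only deviates if it gains at least $1$. Since satisfaction is a nonnegative integer, strict gain gives $\sat(v_i,\vr(\atti,\tstrat))\ge 1$. So when $v_i\in G$ the group is satisfied through $v_i$ itself, and $\sat(G,\vr(\atti,\tstrat))\ge 1\ge JR(G,A^T)$. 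That is the paper's entire argument for this case. Your subcase $\sat(v_i,\vr(\tilde{A}^T))=0$ cannot occur, and you never need $G'=G\setminus\{v_i\}$ or the boundary regime. Your case $v_i\notin G$ matches the paper.

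Under your weak reading, the boundary case cannot be closed, because the statement is false there, even for GreedyJR with lexicographic tie-breaking. Take $n=4$, $T=2$, $C_1=C_2=\{c_1,c_2,c_3\}$, and true approvals (round 1, round 2): $v_1$: $\{c_2,c_3\},\{c_3\}$; $v_2$: $\{c_3\},\{c_3\}$; $v_3$: $\{c_2\},\{c_2\}$; $v_4$: $\{c_1\},\{c_1\}$.
\begin{itemize}
\item Truthfully, GreedyJR outputs $(c_2,c_1)$. So $G=\{v_1,v_2\}$, which is $2$-cohesive with $\flr{2\cdot 2/4}=1$, is satisfied via $v_1$, while $v_2$ gets $0$.
\item If $v_2$ declares $\{c_1\}$ in round 1, round 1 becomes a tie between $c_1$ and $c_2$, won by $c_1$. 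Then $v_1,v_3$ keep weight $1$ and round 2 is a tie between $c_2$ and $c_3$, won by $c_2$. The output is $(c_1,c_2)$.
\item $v_2$ still has true satisfaction $0$, so your condition holds with equality, yet $G$ now has true satisfaction $0$.
\end{itemize}
This sits exactly in your regime $\ell(|G|-1)/n<1\le \ell|G|/n$. So no auxiliary-group or counting argument can rescue it. The right fix is to use strict improvement, not to weaken the guarantee to $|G|-1$.
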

\begin{proof}
Fix any instance $(N, A^T,\ C^T)$ with $A^T$ truthful and voting rule $\vr$ that satisfies JR. Assume that $\vr$ is manipulable and that the instance is non-trivially satisfied, that is that there exists some $\ell$-cohesive group $S$ such that $JR(S, A^T) = 1$ (otherwise JR is vacuously satisfied on that instance). Let $G$ be any such group and let $v_i$ be any voter who has a strategy $\tstrat \neq A^T(v_i)$ such that $\sat(v_i, \vr(\atti, \tstrat) > \sat(v_i, \vr(A^T))$. We consider two cases, $v_i \not\in G$ and $v_i \in G$. For the former, since $v_i$ was not part of the group $G$, $G$ still forms a $\ell$-cohesive group in $(\atti, \tstrat)$, and since $\vr$ guarantees JR, the group retains the same guarantee that it did under $A^T$. Suppose then that $v_i \in G$. In the worst case, $v_i$'s untruthful approvals cause $G$ to no longer be $\ell$-cohesive in $(\atti, \tstrat)$, and $\vr$ does not satisfy $G \setminus  v_i$ at all. However, $v_i$ gained strictly greater approval by declaring $\tstrat$, thus $v_i$ must have satisfaction at least 1, and therefore $\sat(G, (\atti, \tstrat)) \geq 1 \geq JR(G, A^T)$. 
\end{proof}

This is essentially a consequence of the fact that JR only guarantees a satisfaction of 1, and a voter only acts strategically if they can gain at least 1 additional satisfaction over declaring truthfully. We now see that for PJR, a single voter can in principle significantly decrease a group's satisfaction. 

\begin{proposition}
    The price of manipulability for a PJR voting rule when a single voter acts strategically is $\Omega(T)$. 
\end{proposition}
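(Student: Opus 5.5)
The price of manipulability is not formally defined before this statement, so I will read it as the worst-case gap $PJR(G,A^T) - \sat(G,\vr(\atti,\tstrat))$. Here $G$ is $\ell$-cohesive under the truthful profile $A^T$, $\vr$ satisfies PJR on truthful inputs, and one voter $v_i$ submits a profitable manipulation $\tstrat$. The plan is to build one family of instances, parameterized by $T$, in which this gap is linear in $T$.

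The construction has a large group $G$ that contains the manipulator $v_i$. The truthful profile will have $G$ cohesive in all $T$ rounds, with $|G|$ large enough that $PJR(G,A^T) = \flr{T|G|/n}$ is a constant fraction of $T$. In the manipulated profile, $v_i$ replaces its common approval with a private singleton in every round. Then $G \setminus \{v_i\}$ is still cohesive, but if $|G|$ is chosen at an integrality threshold, $\flr{T(|G|-1)/n}$ is much smaller.

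I propose $n = 2$ with $G = N = \{v_1,v_2\}$, both truthfully approving $a$ in every round, so that $PJR(G,A^T) = T$. Voter $v_2$ instead approves a private alternative $b$ in every round. In the manipulated profile no group containing both voters is cohesive in any round, and each singleton group is only entitled to $\flr{T/2}$. A PJR rule may therefore pick $b$ in every round $t > \flr{T/2}$ and $a$ otherwise, which respects PJR on the manipulated profile.

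The difficulty is that the manipulation must be profitable for $v_2$ by its true preferences, which are only $a$. Choosing $b$ gives $v_2$ zero true satisfaction, so this particular deviation is not profitable. The construction therefore has to be set up so that $v_2$ genuinely gains by lying. The main obstacle is making profitability and a large group loss hold at the same time.

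I would handle this with free-riding, as in the PP and GreedyJR counterexamples. The manipulator should gain a small amount, say one or two extra rounds. Its deviation breaks cohesiveness of a large group $G$ in $\Theta(T)$ rounds, and because $\vr$ is only required to satisfy PJR, we are free to choose a PJR-compliant outcome on the manipulated profile that sends those rounds to alternatives outside $\acup_t(G)$, for example alternatives approved by a disjoint group $H$. Concretely, take $n$ voters: a group $G$ of size just above $n/2$ that contains $v_i$, and a complementary group $H$. Truthfully, $G$ agrees on $a$ in every round. Under the manipulation, $v_i$ drops $a$ in $\Theta(T)$ rounds and approves an alternative $b$ jointly with $H$ in the remaining rounds. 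Then $|H \cup \{v_i\}| > n/2$, so that group now has the larger entitlement and $G \setminus \{v_i\}$ has less than half.

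The rule can then be specified so that it is PJR on every profile, for instance by choosing the PJR-satisfying outcome that favors the majority bloc. On the manipulated profile it gives $b$ most rounds, which are rounds $v_i$ truly approves if its true approvals include $b$ in the rounds where it is cohesive with $H$, so $v_i$ gains. Meanwhile $G$ drops from roughly $T/2$ satisfaction to $O(1)$ outside rounds where $v_i$'s approval of $b$ counts. The remaining work is routine arithmetic: checking that PJR holds for every subgroup on both profiles, and that the gap is $\Omega(T)$.
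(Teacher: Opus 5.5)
There is a genuine gap: the construction cannot make $G$ lose anything, let alone $\Omega(T)$. There are two reasons.

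First, only $v_i$'s ballot changes, so $G\setminus\{v_i\}$ is still $\ell$-cohesive on the same rounds in $(\atti,\tstrat)$. Any rule that is PJR on that profile must give it $\flr{\ell(|G|-1)/n}$. With $|G|$ just above $n/2$ and $\ell=T$, this is about $T(n-1)/(2n)\ge T/3$, so ``$G$ drops to $O(1)$'' is impossible. In general,
$PJR(G,A^T)-\sat(G,\vr(\atti,\tstrat))\le\flr{\ell|G|/n}-\flr{\ell(|G|-1)/n}\le\lceil \ell/n\rceil$.
This is exactly the computation the paper's proof consists of. It writes $\ell=qn+r$, shows $\flr{\ell(|G|-1)/n}\ge q|G|-q$, adds one for $v_i$'s gain, and reads the drop of roughly $q=\flr{\ell/n}$ below $\flr{\ell|G|/n}$ as the $\Omega(T)$ price. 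The paper never builds an instance. An explicit instance, as you attempt, would be a stronger statement, but it can only give $\Omega(T)$ when $n=O(1)$.

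Second, $\sat(G,\cdot)$ is computed from true approvals and $v_i\in G$. The $b$-rounds that make the deviation profitable therefore count for $G$, and you cannot set them aside (``outside rounds where $v_i$'s approval of $b$ counts''). Worse, take your profile as described: $v_i$ approves $a$ in every round, and the rest of $G$ approves only $a$. Then every round that satisfies $G$ under truth also satisfies $v_i$, so $\sat(v_i,\vr(A^T))=\sat(G,\vr(A^T))\ge PJR(G,A^T)$. Any profitable deviation then gives $\sat(G,\vr(\atti,\tstrat))\ge\sat(v_i,\vr(\atti,\tstrat))>PJR(G,A^T)$, so your gap is negative.

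A repair needs $n$ constant. It also needs the other members of $G$ to be served, under truth, by alternatives $v_i$ does not approve, so that $v_i$'s truthful satisfaction sits far below $PJR(G,A^T)$. For instance, take $n=3$, $T=3k$ and $G=\{v_1,v_2\}$. Let $v_1,v_2,v_3$ approve $\{a,c\}$, $\{a,d\}$, $\{e\}$ in every round, except that $v_2$ truly approves $\{a,d,e\}$ in round $2$.
\begin{itemize}
\item Truthful case: the outcome cycling $a,c,e$ is PJR on every prefix and gives $v_2$ satisfaction $k$.
\item Manipulated case: $v_2$ declares $\{e\}$ throughout, and the outcome cycles $e,e,a$. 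This is PJR on every prefix of the declared profile and gives $v_2$ true satisfaction $k+1$.
\item Effect on $G$: under the manipulation $G$ gets only $k+1$, against $PJR(G,A^T)=2k$.
\end{itemize}
Even then you must still exhibit a single rule that is PJR on all instances and outputs these two sequences. ``Pick the PJR outcome favoring the majority bloc'' does not do this. In the paper's online model, where PJR must hold on every prefix and outcomes are fixed round by round, it needs an actual argument.
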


\begin{proof}
Fix a manipulable voting rule $\vr$ that satisfies PJR and any instance $(N,A^T,C^T)$ with truthful $A^T$ for which PJR is non-trivially satisfied. Fix any voter $v_i$ with a strategy $\tstrat$ that gets $v_i$ strictly greater satisfaction than declaring truthfully. For any group $G$ such that $v_i \in G$, $G$ may no longer be $\ell$-cohesive in $(\atti, \tstrat)$. Because $\vr$ satisfies PJR, $G\setminus \{ v_i \}$ still has satisfaction $\flr{\ell \frac{|G| - 1}{n}} = \flr{\ell \frac{|G|}{n} - \ell \frac{1}{n}}$. For any $\ell$, $\ell = qn + r$, therefore $\flr{\ell \frac{|G|}{n} - \ell \frac{1}{n}} = \flr{(qn + r)\frac{|G|}{n} - (qn + r)\frac{1}{n}} = q|G| - q + \flr{\frac{r}{n}(|G| - 1)} \geq q|G| - q$. Because $v_i$ acted strategically, it gained satisfaction of at least 1, therefore $\sat(G, \vr(\atti, \tstrat)) \geq q|G| - q + 1$. 
\end{proof}

Thus we see that in principle a single strategic voter can have a significant impact on PJR satisfaction lower bounds for $\ell$-cohesive groups, in particular when $T$ or $|G|$ are small. An analysis for the price of manipulability for multiple strategic voters likely involves an equilibrium analysis and is a very interesting open question beyond the scope of this paper. Moreover, only a general lower bound can be set in this way, and tight lower bounds will require analysis of specific voting rules. 

\section{Asymptotic analysis} \label{sec:asymptotic}

We now turn to analyzing the asymptotic behaviour of voting rules, that is, what happens to group representation when $T$ grows large and in particular when it grows large relative to $n$. 

\begin{definition}[Asymptotic proportionality]
A voting rule $\vr$ satisfies some proportional representation axiom $PR$ asymptotically if for all input instances $(N,A^T,C^T)$ and all $\ell$-cohesive groups $G$ in $A^T$, we have that
\[
\dfrac{\sat(G, \vr(A^T))}{PR(G,A^T)} \leq \rho + o(PR(G, A^T))
\]
\end{definition}

As the number of rounds $T$ grows large, a voting rule that does not satisfy some particular proportionality axiom may approximate it. We now prove the existence of such a voting rule.

\subsection{A SP voting rule that achieves PJR asymptotically}

\begin{algorithm}[H]
\caption{Temporal Serial Dictator}\label{alg:cap}
$\sigma \gets$ an arbitrary permutation of $[n]$

$\chi \gets$ an arbitrary choice function

\For{$t = 0$ to $T-1$}{

$j \gets \sigma((t \mod n) + 1)$

$\bw_{t+1} \gets \chi(A_{t+1}(v_j))$
}
\end{algorithm}

Temporal Serial Dictator (TSD) is a simple round-robin type algorithm very similar to others seen in computational social choice. Clearly, it is both online and efficiently computable. We first prove that it fails to satisfy even weak JR.

\begin{proposition}
    Temporal Serial Dictator does not satisfy wJR.
\end{proposition}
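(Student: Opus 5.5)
We prove the statement with an explicit counterexample. We build an instance in which some group $G$ is $T$-cohesive with $\flr{T\cdot \frac{|G|}{n}} \geq 1$, yet Temporal Serial Dictator never picks an alternative approved by any member of $G$. Since wJR only concerns groups cohesive in all rounds ($\ell = T$), it suffices to find one such $G$ with $\sat(G,\vr(A^T)) = 0$.

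The construction exploits the fact that TSD consults only one voter per round. Take $n = 3$, $T = 2$, $C_1 = C_2 = \{a,b\}$, and $\sigma$ the identity permutation. Then the dictators are $v_1$ in round 1 and $v_2$ in round 2. Let $v_1$ and $v_2$ approve $\{a\}$ in both rounds, and let $v_3$ approve $\{b\}$ in both rounds. Every choice function must satisfy $\chi(\{a\}) = a$, so the outcome is $(a,a)$ regardless of $\chi$. This also respects the complete-approvals assumption.

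Now check that this violates wJR. The group $G = \{v_3\}$ is $2$-cohesive, since $\acap_t(G) = \{b\} \neq \emptyset$ in both rounds. However, $\flr{2\cdot \frac{1}{3}} = 0$, so this $G$ does not yet give a violation. We need the bound to reach 1, and a singleton is enough if $T \geq n$. So take $T = 3$ and $n = 3$. Let $v_1$ and $v_2$ approve $\{a\}$ in every round, and let $v_3$ approve $\{b\}$ in every round. Choose $\sigma$ so that $v_3$ is not the dictator in round 3; for instance, with $n = 4$ and $T = 4$, the dictators cover every voter, which is what we must avoid. A cleaner fix is to make the dictator in $v_3$'s own turn unable to help. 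This does not work, because TSD with $\chi$ always lets $v_3$ pick something $v_3$ approves.

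Instead, we use a larger group whose members all have a turn but each approve an extra alternative. Let $G = \{v_1,v_2\}$ with $n = 3$ and $T = 3$. In every round, $v_1$ approves $\{a,c\}$, $v_2$ approves $\{a,d\}$, and $v_3$ approves $\{b\}$. Then $G$ is $3$-cohesive on $a$, and $\flr{3\cdot\frac{2}{3}} = 2 \geq 1$. This still fails, because when $v_1$ dictates, $\chi(\{a,c\}) \in \{a,c\}$ is approved by $v_1 \in G$. So any voter in $G$ who gets a turn satisfies $G$. The obstacle is therefore choosing $T$ small relative to $n$ so that no member of $G$ ever dictates, while keeping $\flr{T|G|/n} \geq 1$.

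The fix is to take $T < n$ with $\sigma$ placing the members of $G$ last. Let $n = 4$, $T = 2$, $G = \{v_3,v_4\}$, and $C_1 = C_2 = \{a,b\}$. Let $v_3$ and $v_4$ approve $\{a\}$ in both rounds, let $v_1$ and $v_2$ approve $\{b\}$ in both rounds, and let $\sigma$ be the identity. Then the dictators are $v_1$ and $v_2$, and the outcome is $(b,b)$. The group $G$ is $2$-cohesive with $\flr{2\cdot\frac{2}{4}} = 1$, but $\sat(G,\vr(A^T)) = 0$, which violates wJR. The final writeup will present only this instance and note that a fixed $\sigma$ is part of the rule, so the adversary may name voters accordingly.
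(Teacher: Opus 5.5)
Your final instance is correct and is essentially the paper's construction specialized to $n=4$. The shared ingredients are $T=2$, the identity $\sigma$, and a cohesive group of about half the voters placed after the two dictators, whose common approvals the dictators do not share; your use of singleton ballots also makes the outcome independent of $\chi$. Your choice $G=\{v_3,v_4\}$ of size exactly $n/2$ also avoids an off-by-one in the paper's $G=\{v_{n/2},\dots,v_n\}$, which for $n=4$ contains the round-$2$ dictator $v_2$, so as written the paper's argument needs even $n\geq 6$.
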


\begin{proof}
    Let $T = 2$, $n$ be even and $C_1=C_2$ such that $|C_1| > 1$. Wlog, let the permutation on voters be identity. Let $G = \{ v_{n/2}, \dots, v_n \}$ and $A_1(v) = A_2(v) \subset C_1$ for all $v \in G$. Let the approval sets for all other voters not in $G$ in rounds 1 and 2 be any arbitrary subsets of $C_1 \setminus \acup_1(G)$. $JR(G,A^T) = \min(1, \flr{2\cdot \frac{n/2}{2}}) = 1$. In both rounds, an alternative from the approval set of some voter not in $G$ is selected, leaving $G$ with satisfaction 0. Clearly, $\sat(G,A^T) < JR(G, A^T)$, thus TSD does not satisfy wJR.
\end{proof}

However, as $T$ grows large, TSD gets arbitrarily close to satisfying PJR, a much stronger proportionality axiom. 

\begin{proposition}
    TSD satisfies PJR asymptotically.
\end{proposition}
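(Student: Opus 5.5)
The plan is to prove a slightly stronger, additive statement: for every instance and every $\ell$-cohesive group $G$,
\[
\sat(G,\vr(A^T)) \geq \flr{\ell\cdot\tfrac{|G|}{n}} - (n-1).
\]
This is PJR up to an additive constant independent of $T$, as announced in the abstract. The asymptotic statement then follows by dividing by $PJR(G,A^T)$. The ratio $\sat(G,\vr(A^T))/PJR(G,A^T)$ is at least $1 - (n-1)/PJR(G,A^T)$, which tends to $1$ as $PJR(G,A^T)$ grows with $n$ fixed. This is how I would read the intended inequality in the definition of asymptotic proportionality, with $\rho = 1$ and the lower-order error term. Where $PJR(G,A^T)$ stays bounded, the additive bound is itself the meaningful guarantee.

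The key observation is that TSD does not need cohesiveness at all to satisfy $G$. In any round $t$ whose dictator $v_j = v_{\sigma((t-1 \bmod n)+1)}$ lies in $G$, the outcome is $\chi(A_t(v_j)) \in A_t(v_j) \subseteq \acup_t(G)$. This uses the complete-approvals assumption that $A_t(v_j) \neq \emptyset$. Hence $\sat(G,\vr(A^T))$ is at least the number of rounds in $[T]$ whose dictator belongs to $G$.

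Next I count these rounds. Write $T = qn + r$ with $0 \le r < n$. Each full block of $n$ consecutive rounds gives every voter exactly one dictatorship, so the first $qn$ rounds contribute exactly $q|G|$. The remaining $r$ rounds contribute $k := |\{ s \le r : v_{\sigma(s)} \in G\}| \geq 0$. Therefore $\sat(G,\vr(A^T)) \ge q|G| + k \ge q|G|$.

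On the other side, $\ell \le T$ gives
\[
\flr{\ell\tfrac{|G|}{n}} \le \flr{T\tfrac{|G|}{n}} = q|G| + \flr{r\tfrac{|G|}{n}} \le q|G| + \min(r,|G|).
\]
Combining the two bounds gives a gap of at most $\min(r,|G|) \le n-1$. For a fixed-size electorate this is the additive constant we want.

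The calculation itself is routine. I expect the main obstacle to be the formal mismatch with the stated definition, whose displayed inequality, read literally with "$\leq \rho$", is an upper bound and not what is intended. I would state explicitly that the result proved is the additive lower bound above, which implies the ratio tends to $1$. I would also note that the constant cannot be removed, since the earlier proposition shows TSD fails even wJR when $T$ is small. A remark that the bound is tight up to constants (take $\sigma$ placing all of $G$ last and $T = qn + (n-|G|)$) would round out the argument.
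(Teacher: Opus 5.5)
Your proof is correct and follows the paper's argument: the round-robin count gives $\sat(G,\vr(A^T)) \geq q|G|$, and $\ell \leq T$ caps $PJR(G,A^T)$ at $q|G| + \flr{r|G|/n} \leq (q+1)|G|$. The only difference is in how the conclusion is packaged. You state it as an additive gap of at most $n-1$, which matches the abstract's ``up to an additive constant.'' The paper instead reports the ratio bound $q/(q+1) \to 1$, and your same two inequalities give that immediately.
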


\begin{proof}
TSD ensures that at least one voter is satisfied each round, and that every $n$ rounds, each voter is satisfied at least once. For any $T$, $T = qn + r_1$. It follows that in $T$ rounds, any group has satisfaction at least $q|G|$. In particular, over $T$ rounds, any $\ell$-cohesive group has satisfaction at least $q|G|$. In order to satisfy PJR, all $\ell$-cohesive groups must have satisfaction at least $\flr{\ell \cdot \frac{|G|}{n}}$. For any $\ell$, $\ell = q'n + r_2$, and so $\flr{\ell \cdot |G|/n} = \flr{(q'n + r_2)\cdot |G|/n} = \flr{\frac{q'n|G|}{n} + \frac{r_2|G|}{n}} = q'|G| + \flr{\frac{r_2}{n}|G|} \leq (q'+1)|G|$ where the last inequality holds as $r_2 < n$. Thus $PJR(G,A^T) \leq (q'+1)|G|$, and this expression is maximal when $q'=q$. TSD guarantees $\sat(G,\vr(A^T)) \geq q|G|$, therefore the ratio of satisfaction for any $\ell$-cohesive group achieved by TSD is $\dfrac{q|G|}{(q+1)|G|} = \dfrac{q}{q+1} \rightarrow 1$ as $q \rightarrow \infty$.
\end{proof}

There is a known, efficient algorithm that satisfies PJR unconditionally, namely Perpetual Phragmén. However, PP is not SP in contrast to TSD as we now show.

\begin{proposition}
TSD is SP.
\end{proposition}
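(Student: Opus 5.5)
The plan is a round-by-round comparison that exploits two facts about TSD: the dictator schedule $\sigma$ is fixed in advance, and the outcome in a round depends only on that round's dictator's ballot.

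Fix an instance $\instance$, a voter $v_i$ with truthful approvals $A^T(v_i)$, and any misreport $\tstrat$. Put $A^T = (\atti, A^T(v_i))$ and $\tilde{A}^T = (\atti, \tstrat)$. Since satisfaction is additive over rounds, it suffices to prove, for every round $t \in [T]$,
\[
\sat_t(v_i, \vr(\tilde{A}^T)) \leq \sat_t(v_i, \vr(A^T)).
\]
Summing over $t$ then gives the SP inequality.

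The key step is to note that the dictator in round $t$ is $j_t = \sigma(((t-1) \bmod n) + 1)$. This index depends only on $\sigma$, $t$ and $n$, and not on any approvals, past or present. The outcome in round $t$ is $\chi(A_t(v_{j_t}))$. Unlike WAMs, MES or PP, TSD keeps no state that carries information across rounds, so misreports in earlier rounds cannot affect round $t$. This rules out the free-riding that broke SP for the earlier rules. I then split into two cases.

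\emph{Case $j_t \neq i$.} The ballot $A_t(v_{j_t})$ is the same in $A^T$ and $\tilde{A}^T$, because only $v_i$'s approvals differ. Hence the round-$t$ outcomes coincide and so do the round-$t$ satisfactions.

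\emph{Case $j_t = i$.} Under truthful reporting the outcome is $\chi(A_t(v_i))$. This lies in $A_t(v_i)$ provided $\chi$ is a choice function, meaning $\chi(S) \in S$ for nonempty $S$; the complete-approvals assumption guarantees $A_t(v_i) \neq \emptyset$. So $\sat_t(v_i, \vr(A^T)) = 1$, which is the maximum possible value, and the inequality holds trivially.

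I do not expect any real obstacle. The only points needing care are the two modelling facts just used: that ``choice function'' means $\chi$ selects an element of its argument, and that complete approvals make the truthful dictator ballot nonempty. I would state both explicitly. I would also note that the argument works for every fixed $\sigma$ and $\chi$ chosen independently of the input.
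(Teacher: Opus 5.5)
Your proof is correct and follows the same route as the paper. The round-$t$ outcome depends only on the current dictator's ballot, so $v_i$ can affect only the rounds in which it is dictator, and there truthful reporting already gives the maximal round satisfaction of $1$. Your version is a bit more careful than the paper's: it tracks $\sigma$ and the round indexing, makes the per-round summation explicit, and states that $\chi(S) \in S$ and that complete approvals make the truthful ballot nonempty.
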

\begin{proof}
Fix any instance $(N,A^T, C^T)$ and voter $v_i \in N$. For all rounds $t \leq T$, the outcome is a function only of $A_t(v_i)$ for $i = (t \mod n) + 1$, thus we need only consider the strategies that $v_i$ may play when TSD chooses from its approval set. For any such round, if $v_i$ declares truthful approvals $A_t(v_i)$, then it is guaranteed $\sat(v_i, \vr(A^T)) = 1 \geq \sat(v_i, \vr(\ati, \onstrat))$ for any approvals $\onstrat$. 
\end{proof}

Given the elusiveness of voting rules that satisfy SP and proportionality axioms, it is notable that given enough rounds a naive algorithm such as TSD is strategyproof and gets arbitrarily close to PJR.

\subsection{Asymptotic analysis of the price of manipulability}
We continue our asymptotic analysis by looking at what happens to the price of manipulability when $T$ grows large. We first note that for large $T$, JR is satisfied trivially. 
\begin{proposition}
    For $T\geq n$, any voting rule $\vr$ that satisfies JR also satisfies JR under any strategic manipulation.
\end{proposition}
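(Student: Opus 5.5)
The plan is to show that once $T \geq n$, JR forces \emph{every individual voter} to be satisfied at least once. Then no group can ever have satisfaction $0$, and the JR bound of at most $1$ holds automatically under any manipulation.

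First I would use the complete-approvals assumption of the model. For any voter $v$ and any approval profile declared to the rule, truthful or not, the singleton $\{v\}$ has $\acap_t(\{v\}) = A_t(v) \neq \emptyset$ in every round. So $\{v\}$ is $T$-cohesive in the declared profile. Its JR bound is $\min(1, \flr{T \cdot \frac{1}{n}})$, which equals $1$ precisely because $T \geq n$. Since $\vr$ satisfies JR on whatever profile it receives, every voter has satisfaction at least $1$ \emph{with respect to the approvals they declared}.

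Next I would convert this into a statement about true satisfaction, splitting voters into two kinds, following the setup of the earlier single-manipulator proposition.
\begin{itemize}
\item A voter who declares truthfully has declared approvals equal to true approvals. Their true satisfaction in the manipulated outcome is therefore at least $1$.
\item A voter $v_i$ who deviates to $\tstrat$ is assumed, as in the price-of-manipulability section, to do so only if it strictly increases true satisfaction. Applying the singleton argument to the truthful profile $A^T$ gives $\sat(v_i, \vr(A^T)) \geq 1$. Hence $\sat(v_i, \vr(\atti, \tstrat)) \geq 2 \geq 1$.
\end{itemize}
With several manipulators I would state the analogous assumption: each manipulator's true satisfaction under the realized profile is at least what it would obtain by reverting to truthful play. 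Reverting yields a profile in which that voter's singleton is still $T$-cohesive and truthful, so the same conclusion follows.

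Finally, for any group $G$ that is $\ell$-cohesive in the truthful profile, pick any $v \in G$. Then $\sat(G, \cdot) \geq \sat(v, \cdot) \geq 1 \geq JR(G, A^T)$, which completes the argument.

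The main obstacle is the multi-manipulator case. The statement says ``any strategic manipulation'', and the true satisfaction of a voter who misreports is not directly controlled by JR, which only sees declared ballots. I would handle it by making the individual-rationality assumption above explicit. If that is deemed too strong, I would weaken the claim to the single-manipulator setting. There the previous proposition's reasoning combined with the singleton argument suffices.
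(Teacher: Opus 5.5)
Your proposal is correct and takes essentially the same route as the paper: with complete approvals each singleton $\{v\}$ is $T$-cohesive, so for $T \geq n$ JR forces every voter, and hence every group, to have satisfaction at least $1$. You are more careful than the paper's one-line justification, which silently treats declared and true satisfaction as the same. Your split into truthful voters and improving manipulators supplies that missing step, and the paper's own convention for strategic profiles in the next proposition, $\sat(v,\tilde{A}^T) > \sat(v,A^T)$ relative to the all-truthful profile, would handle several manipulators just as well without your unilateral-reversion assumption.
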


Clearly, if $T \geq n$, then $JR(v, A^T) \geq 1$ for all voters $v \in N$, and thus any $\ell$-cohesive group in $A^T$ has satisfaction 1 and thus JR is satisfied. The case for PJR is slightly more nuanced but we can still use satisfaction lower bounds on individual voters to get a lower bound for any group. 

\begin{proposition}
    For any instance $\instance$ and voting rule $\vr$ that satisfies PJR, the satisfaction of any $\ell$-cohesive group $G$ is at least $\frac{1}{n}PJR(G,A^T)$ as $T$ grows large. 
\end{proposition}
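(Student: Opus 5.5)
The plan is to bound the satisfaction of $G$ from below using the PJR guarantee of a single voter, in the same spirit as the remark that for $T \geq n$ every singleton has a JR guarantee of at least one. Fix an instance $(N,A^T,C^T)$ and a group $G$ that is $\ell$-cohesive in the declared profile. The context is the price of manipulability, where $G$ may have been broken up by a strategic voter. So I would not apply PJR to $G$ itself. Instead I would apply it to a singleton $\{v\}$ with $v \in G$ that remains truthful.

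First, note that any single voter $v$ is trivially $T$-cohesive under complete approvals, since $A_t(v) \neq \emptyset$ for every $t$. More usefully, $v$ is $\ell$-cohesive whenever $G \ni v$ is $\ell$-cohesive, because $\acap_r(\{v\}) \supseteq \acap_r(G) \neq \emptyset$ on the same $\ell$ rounds $R$. Since $\vr$ satisfies PJR, this gives
\[
\sat(G,\vr(A^T)) \geq \sat(\{v\},\vr(A^T)) \geq \flr{\ell/n},
\]
and in fact $\flr{T/n}$ using the $T$-cohesiveness of singletons.

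Second, compare this with $PJR(G,A^T) = \flr{\ell |G|/n} \leq \ell |G|/n \leq \ell$. Writing $\ell = qn + r$, the single-voter bound is $q$. We have
\[
\frac{1}{n}PJR(G,A^T) \leq \frac{\ell}{n} = q + \frac{r}{n} < q+1.
\]
So the single-voter bound is at least $\frac{1}{n}PJR(G,A^T) - 1$, and this additive loss vanishes relative to the bound as $\ell$, and hence $T$, grows. Using the stronger $T$-cohesive bound $\flr{T/n}$ together with $\ell \leq T$ also works. This is what "as $T$ grows large" should mean, in the same asymptotic sense as the definition of asymptotic proportionality: the ratio $\sat(G,\vr(A^T)) / (\frac{1}{n}PJR(G,A^T))$ is at least $q/(q+1) \to 1$.

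Third, for the manipulation setting, I would observe that the argument only needs one voter $v \in G$ whose declared approvals are truthful. When a single voter manipulates and $|G| \geq 2$, such a $v$ exists. If the manipulator is the only member of $G$, it gains satisfaction by deviating, so its true satisfaction is at least its truthful PJR bound.

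The main obstacle is interpretive: reconciling ``at least'' with the floor losses, that is, whether the statement is exact or holds only up to $o(\cdot)$. I would state the result as $\sat(G,\vr(A^T)) \geq \flr{\ell/n} \geq \frac{1}{n}PJR(G,A^T) - 1$ and then pass to the limit.
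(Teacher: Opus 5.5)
Your proposal is correct and is essentially the paper's proof: the paper uses your $T$-cohesive singleton bound (the right choice, since it also covers the case where $\ell$ stays bounded as $T$ grows), giving $\sat(G,\vr(\tilde{A}^T))\geq q=\flr{T/n}$, and compares this with $PJR(G,A^T)<(q+1)|G|$ to get a ratio tending to $1/|G|\geq 1/n$, whereas you drop the factor $|G|/n$ at once and state the bound additively. Two small fixes are needed: $G$ should be $\ell$-cohesive in the truthful profile $A^T$ rather than the declared one, and the restriction to a single manipulator is unnecessary, since your own observation that a manipulator's true satisfaction exceeds its truthful satisfaction (itself at least $q$ by PJR on $A^T$) handles groups in which several or all members deviate, as the paper allows.
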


\begin{proof}
For any voting rule satisfying PJR, we can asymptotically bound the satisfaction of any group in a simple way. Fix any $\instance$ and let $\vr$ be a voting rule that satisfies PJR. Let $\tilde{A}^T$ be any strategic approval profile for $N,\ C^T$, that is that some voters submit strategic approvals $\tilde{A}^T(v) \neq A^T(v)$ such that $\sat(v, \tilde{A}^T) > \sat(v, A^T)$. In the worst case, $\vr$ provides exactly minimum satisfaction for all groups, and every $\ell$-cohesive group in $A^T$ is no longer $\ell$-cohesive in $\tilde{A}^T$. However, $\vr$ must still satisfy individual voters as they constitute an $\ell$-cohesive group of size $1/n$. For any $T$, $T = qn + r$ and for any voter $v$, $PJR(v,A^T) = \flr{T  \cdot \frac{1}{n}} = q$, therefore $\sat(v,\vr(A^T)) \geq q$. It follows that any group must have satisfaction at least $q$, as $\sat(G,\tilde{A}^T) \geq \sat(v,\tilde{A}^T)$ for any $v \in G$. It may be that all voters in the group are satisfied in the same rounds, in which case $\sat(G,\tilde{A}^T) = \sat(v,\tilde{A}^T) = q$. For any $\ell$-cohesive group, $PJR(G, A^T) = \flr{\ell \cdot \frac{|G|}{n}} \leq (q+1)|G|$. This gives us the worst-case ratio $\frac{q}{(q + 1)}|G| \rightarrow \frac{1}{|G|} \geq 1/n$ as $q$ grows large. 
\end{proof}

Contrast this loose lower bound with TSD: we see that TSD satisfies PJR up to an additive constant which disappears as $T$ grows large, which is significanlty better than a factor of $1/n$. We also notice that for $n \ll T$, this loose bound may amount to nothing more than a small scaling factor. Moreover, it is not immediately clear whether any voting rules allow for such a high price of manipulability. In any case, this serves as a benchmark against which we can compare the price of manipulability of voting rules that satisfy PJR asymptotically.

\section{Conclusion and open problems}

We have highlighted two significant gaps. The first is the gap between OSP and SP, the online and general concepts of strategyproofness. In the general case, achieving both strategyproofness and proportionality seems difficult, and perhaps impossible. However, in the online setting, where voting rules and agents are assumed to act myopically, many voting rules turn out to be (online) strategyproof. Moreover, it has been shown that some of them do satisfy strong proportionality axioms, namely that GreedyJR satisfies JR, MES satisfies wEJR \cite{chandak} and Perpetual Phragmén satisfies PJR. A significant open question is whether there exists a (semi-)online voting rule that satisfies wEJR; if such a rule exists, does it satisfy OSP?
 
The second gap highlighted is one between multi-winner approval voting and temporal voting. We have shown that in temporal voting, when $T \gg n$, justified representation variants can be approximated (or even trivially satisfied, as is the case for JR). There doesn't seem to be any such analogue for multi-winner approval voting; if, for example, the number of alternatives is much greater than the committee size, or much greater than the number of voters, there is no obvious way in which this makes it easier to satisfy justified representation variants. A natural follow-up question is to ask whether multi-winner temporal voting \cite{temporal_multiwinner}, where in each round multiple winners are chosen, benefits from the same asymptotic leeway as when a single alternative is chosen in each round.

Another open question is whether Temporal Serial Dictator somehow characterizes SP voting rules that asymptotically achieve justified representation variants, analogously to the case of how the Like mechanism characterizes mechanisms that are both SP and EF \textit{ex ante} in online fair division \cite{walsh-fair-division}. If this turns out to be true, it is interesting in that a naive algorithm is best in class. If not, then other interesting (O)SP voting rules may exist that are asymptotically proportional and satisfy other desirable properties. 

Further interesting open questions include the rate of convergence of asymptotic proportionality. Might it be the case that some (O)SP voting rules converge to PJR faster than others? There may also exist a meaningful notion of partial convergence, where it can be shown that within some number of rounds, the satisfaction of `most' groups has converged to at least its PJR lower bound guarantee. 

\printbibliography

\end{document}